\newif\ifIFAC                                                                  
\DeclareMathOperator{\diag}{diag}
\DeclareMathOperator*{\argmin}{arg\,min}
\newcommand{\ind}[1]{\text{I}\{#1\}}
\newcommand{\E}[1]{\mathbb{E}\big\{#1\big\}}
\DeclarePairedDelimiter{\PrfencesHard}{[}{]}
\DeclarePairedDelimiter{\PrfencesHardBig}{\Big[}{\Big]}
\DeclarePairedDelimiter{\PrfencesSoft}{(}{)}
\renewcommand{\Pr}{\operatorname{Pr}\PrfencesHard}
\newcommand{\PR}{\operatorname{Pr}\PrfencesHardBig}
\renewcommand{\diag}{\operatorname{diag}\PrfencesSoft}
\renewcommand{\vec}{\operatorname{vec}\PrfencesSoft}
\newcommand{\rank}{\operatorname{rank}\PrfencesSoft}
\newcommand{\fref}[1]{Fig.~\ref{#1}}
\newcommand{\eref}[1]{(\ref{#1})}
\newcommand{\sref}[1]{Section~\ref{#1}}
\newcommand{\thrmref}[1]{Theorem~\ref{#1}}
\newcommand{\lref}[1]{Lemma~\ref{#1}}
\newcommand{\comment}[1]{}  
\newtheorem{theorem}{Theorem}
\newtheorem{lemma}{Lemma}
\newtheorem{problem}{Problem}
\newtheorem{assumption}{Assumption}
\newcommand{\ones}{\mathds{1}}
\tikzset{square arrow/.style={to path={-- ++(0,-.27) -| (\tikztotarget)}}}
\newcommand{\polylb}{\underaccent{\bar}{\Pi}}
\newcommand{\pii}{\pi_\infty}
\newcommand{\hMii}{\hat M_\infty}
\newcommand{\thetaNR}{\hat \theta_\text{NR}}
\newcommand{\thetaMM}{\hat \theta_\text{MM}}
\newcommand{\thetainit}{\hat \theta_\text{init}}
\newcommand{\hesslik}{\nabla_\theta^2 l_N}
\newcommand{\gradlik}{\nabla_\theta l_N}
\newcommand{\Fisher}{I_F}
\newcommand{\op}{o_\text{p}}
\newcommand{\Op}{\mathcal{O}_p}
\begin{document}
%
\title{Asymptotically Efficient Identification of Known-Sensor Hidden Markov Models}
%
%
%

\author{Robert~Mattila,
        Cristian~R.~Rojas,~\IEEEmembership{Member,~IEEE},\\
        Vikram~Krishnamurthy,~\IEEEmembership{Fellow,~IEEE}
        and~Bo~Wahlberg,~\IEEEmembership{Fellow,~IEEE}
\thanks{This work was partially supported by the Swedish Research
Council and the Linnaeus Center ACCESS at KTH. Robert Mattila, Cristian R. Rojas and Bo Wahlberg are with the Department of
    Automatic Control and ACCESS, School of Electrical Engineering, KTH Royal
    Institute of Technology. Stockholm, Sweden. (e-mails: 
\{rmattila,crro,bo\}@kth.se).
Vikram Krishnamurthy is with the Department of Electrical and Computer
Engineering, Cornell University. Cornell Tech, NY, USA. (e-mail:
vikramk@cornell.edu).}%
        }

        \markboth{}{}%
%



\maketitle

\begin{abstract}
    We consider estimating the transition probability matrix of a finite-state
    finite-observation alphabet hidden Markov model with known observation probabilities.
    The main contribution is a two-step algorithm; a method of moments estimator
    (formulated as a convex optimization problem) followed by a single iteration of
    a Newton-Raphson maximum likelihood estimator. The two-fold contribution of this
    letter is, firstly, to theoretically show that the proposed estimator is consistent
    and asymptotically efficient, and secondly, to numerically show that the method is
    computationally less demanding than conventional methods -- in particular for large
    data sets. 
\end{abstract}

\begin{IEEEkeywords}
    Hidden Markov models, method of moments, maximum likelihood, system identification
\end{IEEEkeywords}

%
\IEEEpeerreviewmaketitle

\section{Introduction}
%
%
%
%

\IEEEPARstart{T}{he} \emph{hidden Markov model} (HMM) has been applied in a diverse range
of fields, e.g., signal processing \cite{krishnamurthy_partially_2016}, gene sequencing
\cite{durbin_biological_1998, vidyasagar_hidden_2014} and speech recognition
\cite{rabiner_tutorial_1989}. The standard way of estimating the parameters of an HMM is
by employing a \emph{maximum likelihood} (ML) criterion. However, numerical ``hill
climbing'' algorithms for computing the ML estimate, such as direct maximization using
Newton-Raphson (and variants, e.g., \cite{cappe_inference_2005}) and the
\emph{expectation-maximization} (EM, e.g, \cite{dempster_maximum_1977,
rabiner_tutorial_1989}) algorithm are, in general, only guaranteed to converge to local
stationary points in the likelihood surface.  It is also known that these schemes can,
depending on the initial starting point of the algorithms, the shape of the likelihood
surface and the size of the data set, exhibit long run-times\comment{ (e.g.,
\cite{srebro_local_2007})}.

\comment{Even in the infinite-sample limit, EM, or other local-search methods,
    might take a very large number of steps to traverse these near-plateaus and
    converge. For this reason the conjecture does not directly imply
tractability.}

An alternative to ML criterion is to match moments of an HMM, resulting in a \emph{method
of moments} estimator (see, e.g. \cite{kay_fundamentals_1993} for details). In such
a method, observable correlations in the HMM data are related to the parameters of the
system. The correlations are empirically estimated and used in the inverted relations to
recover parameter estimates. A number of methods of moments for HMMs have been proposed in
the recent years, e.g., \cite{lakshminarayanan_non-negative_2010, anandkumar_method_2012,
    hsu_spectral_2012, kontorovich_learning_2013, anandkumar_tensor_2014,
mattila_recursive_2015, subakan_method_2015}. The main benefits over iterative ML schemes
are usually consistency and a shorter run-time, however, typically since only low-order
moments are considered, there is a loss of efficiency in the resulting estimate.

In the present letter, the problem of estimating the transition probabilities of
a finite discrete-time HMM with known sensor uncertainties, i.e., observation
matrix, is considered. This setup can be motivated in two ways: firstly, it can
be seen as the second step in a \emph{decoupling approach} to learning the HMM
parameters (see \cite{kontorovich_learning_2013}), or alternatively, by any
application where the sensor used to measure the system is designed/known to
the user. 

The main idea in this letter is a hybrid two-step algorithm based on combining the
advantages of the two aforementioned approaches. The first step uses a method of moments
estimator which requires a single pass over the data set (compared to iterative
algorithms, such as EM, that require multiple iterations over the data set). The second
step uses the method of moments estimate to initialize a non-iterative second-order direct
likelihood maximization procedure. This allows us to avoid resorting to ad hoc heuristics
for localizing a good starting point. More importantly, we show that it is
\emph{sufficient to perform only a single iteration} of the ML procedure to obtain an
asymptotically efficient estimate. Put differently, only two passes through the data set
are necessary in order to obtain an asymptotically efficient estimate.

To summarize, the main contributions of this letter are:
\begin{itemize}
\item a proposed two-step identification algorithm\comment{, for the system dynamics of 
    an HMM with known sensor uncertainties,} that exploits the benefits of
        both the method of moments approach (low computational burden and consistency) and
        direct likelihood maximization (high accuracy);
    \item we prove the consistency and asymptotic efficiency of the proposed
        estimator. Hence, the problem of only local convergence that may haunt
        iterative ML algorithms, such as EM, is shown to be
        avoided;
    \item numerical studies that show that the proposed method is up to an order of
        magnitude faster than the standard EM algorithm -- with the same resulting
        accuracy (when the EM iterations approach the global optimum of the likelihood
        function). Moreover, the run-time is, roughly, constant for a fixed data size,
        whereas the run-time of EM is highly dependent on the data (due to the number of
        iterations needed for convergence).
\end{itemize}

The outline of the remaining part of this letter is as follows. We first present a brief
overview of related work below.  \sref{sec:preliminaries} then poses the problem formally
and \sref{sec:algorithm} presents the algorithm. In \sref{sec:analysis} asymptotical
efficiency is proven, and \sref{sec:numerical_results} presents numerical studies.
\comment{\sref{sec:conclusion} concludes the paper with a brief summary.}

\subsection*{Related Work}
\label{sec:related_work}

HMM parameter estimation is now a classical area (with more than 50 years of literature).
There has recently been interest in the machine learning community for employing methods
of moments for HMMs.  The method presented in \cite[Appendix A]{hsu_spectral_2012}
demonstrates how to recover explicit estimates of the transition and observation matrices
by exploiting the special structure of the moments of an HMM. This method has been further
generalized and put in a tensor framework; see, e.g., \cite{anandkumar_method_2012},
\cite{anandkumar_tensor_2014} and references therein. The appealing attribute of these
methods is that they generate non-iterative estimates using simple linear algebra
operations (eigen and singular-value decompositions). However, the non-negativity and
sum-to-one properties of the estimated probabilities cannot be guaranteed.

There are a number of proposed methods of moments for HMMs formulated as
optimization problems (which allow constrains to be forced on the estimates),
e.g., \cite{lakshminarayanan_non-negative_2010},
\cite{kontorovich_learning_2013} and \cite{subakan_method_2015}. \comment{The
optimization formulation addresses the problem of possibly non-stochastic
estimates (i.e., estimates that do not fulfill the non-negativity and
sum-to-one constraints).} The identification problem is \emph{decoupled} in
\cite{kontorovich_learning_2013} into two stages: first an estimation of the
output parameters, and then a moment matching optimization problem. The
resulting optimization problem is related to the one in
\cite{lakshminarayanan_non-negative_2010} and to the problem in the present work. The
method we propose in this letter could be seen as a possible improvement of the
second step in the setting of \cite{kontorovich_learning_2013}.

In the general setting, hybrid approaches, such as the combination of EM and
direct likelihood maximization, and other attempts to accelerate EM has been
studied in, e.g., \cite{meilijson_fast_1989,fessler_space-alternating_1994}.
Iterative direct likelihood maximization for HMMs, as well as methods for
obtaining the necessary gradient and Hessian expressions, are treated in, e.g.,
\cite{lystig_exact_2002, cappe_recursive_2005, cappe_inference_2005,
turner_direct_2008, khreich_survey_2012, macdonald_numerical_2014}. The
combination of a method of moments and EM has, in the case of HMMs, been
considered in \cite{kontorovich_learning_2013}. 

\comment{
In \cite{lakshminarayanan_non-negative_2010}, non-negative matrix factorization is
employed to solve an optimization problem similar to ours. However, no
convergence results are given for the algorithms presented.
\cite{subakan_method_2015} deals with the special case of left-to-right and
Bakis HMMs as a two stage procedure where one stage involves a convex
optimization problem similar to the one in the present work.}

\comment{
    \cite{mattila_recursive_2015} presented how the algorithm in current work could
be employed in a recursive setting. There, convergence was not discussed and no
comparison to ML estimation was performed.
}

\section{Preliminaries and Problem Formulation}
\label{sec:preliminaries}

All vectors are column vectors unless transposed, $\mathds{1}$ denotes the vector of all
ones. The vector operator $\text{diag}:\mathbb{R}^n \rightarrow \mathbb{R}^{n\times n}$
gives the matrix where the vector has been put on the diagonal, and all other elements are
zero. $\| \cdot \|_F$ denotes the Frobenius norm of a matrix. The element at row $i$ and
column $j$ of a matrix is $[\cdot]_{ij}$, and the element at position $i$ of a vector is
$[\cdot]_i$.  Inequalities ($>, \geq, \leq, <$) between vectors or matrices should be
interpreted elementwise. The indicator function $\ind{\cdot}$ takes the value 1 if the
expression $\cdot$ is fulfilled and 0 otherwise. Let $\to_p$ and $\to_d$ denote
convergence in probability and in distribution, respectively, and let $\mathcal{O}_p$ and
$\op$ be stochastic-order symbols. $\sim$ denotes ``distributed according to''.

\subsection{Problem Formulation}

Consider a discrete-time finite-state \emph{hidden Markov model} (HMM) on the
state space $\mathcal{X} = \{1, 2, \dots, X\}$ with the transition probability
matrix
\begin{equation}
    [P]_{ij} = \Pr{x_{k+1} = j | x_k = i}.
\end{equation} 
Observations are made from the set $\mathcal{Y} = \{1, 2, \dots, Y\}$ according
to the observation probability matrix
\begin{equation}
    [B]_{ij} = \Pr{y_k = j | x_k = i}.
\end{equation}
These matrices are row-stochastic, i.e., the elements in each row sum to one.
Denote the initial distribution as $\pi_0$ and the stationary distribution as
$\pi_\infty$.

The HMM moments are joint probabilities of tuples of observations. The second
order moments can be represented by $Y \times Y$ matrices $M_k$ with elements
\begin{equation}
    [M_k]_{ij} = \Pr{y_{k} = i, y_{k+1} = j}.
    \label{eq:Mk_def}
\end{equation}
The following equation relates the second order moments and the system
parameters,
\begin{equation}
    M_k = B^T \diag{(P^T)^k \pi_0} PB,
    \label{eq:2nd_moments}
\end{equation}
and is the key to the method of moments formulation of the problem.

As we are interested in the asymptotic behaviour, we make the assumption that the initial
distribution $\pi_0$ is known to us -- its influence will anyway diminish over time. The
most important assumption we make is that the observation probabilities $B$ are known.
There are three motivations for this assumption: i) it admits the problem to a convex
formulation, ii) it holds in any real-world application where the sensor is designed by
the user, and iii) our method can be seen as an intermediate step of the \emph{decoupling}
approach in \cite{kontorovich_learning_2013}.  The identification problem we consider is,
hence,
\begin{problem}
    Consider an HMM with known initial distribution $\pi_0$ and known observation matrix
    $B$. The HMM is initialized according to $\pi_0$ and a sequence of observations $y_0,
    y_1, \dots, y_N$ is obtained. Given the sequence of $N+1$ observations
    $\{y_k\}_{k=0}^N$, estimate the transition matrix $P$.
    \label{pr:problem3}
\end{problem}

\comment{
The complete identification problem for HMMs is to estimate all the parameters
needed to specify the HMM, i.e., the transition matrix $P$, the observation
matrix $B$ and the initial distribution $\pi_0$, from a given set of
observations. In our case, as we consider the setup of a sequence of
consecutive observations, we are interested in the asymptotic behaviour of the
estimates and hence make the assumption that the initial distribution $\pi_0$
is known. We also make the assumption that the sensor uncertainties are known
to the user. There are three motivations for this assumption: i) is that it
admits the problem to a convex formulation, ii) is that it holds in any
real-world application where the sensor is designed by the user, and iii) is
that our method can be seen as an intermediate step of the \emph{decoupling}
approach in \cite{kontorovich_learning_2013}. Hence, the problem we consider
is:
}

\section{Asymptotically Efficient Two-Step Algorithm}
\label{sec:algorithm}

In this section, we outline the two-step algorithm which is the main contribution of this
letter.

\comment{
In this section, we outline the algorithm we propose. Recall that in the method
of moments, parameters of the system are related to the observable moments. The
moments are then empirically estimated and the relations are used in reverse to
yield estimates of the parameters. This can be formulated as an optimization
problem, where the objective is to minimize the mismatch between the observed
moments and the analytical expression \eref{eq:2nd_moments}.
}

\subsection*{Step 1. Initial Method of Moments Estimate}

In light of \eref{eq:Mk_def}, use the empirical moments estimate
\begin{equation}
    [\hat M_\infty]_{ij} = \frac{1}{N} \sum_{k=0}^{N-1} \ind{y_k = i,
    y_{k+1} = j},
    \label{eq:moment_estimator}
\end{equation}
for the (stationary) second order moments. \comment{Notice that we
consider only one realization, and hence cannot hope to estimate $M_k$ for all
$k$. However, as the processes reaches stationarity, we will be able to
estimate the stationary moments using the above estimator.}

In the moment matching optimization problem, we need to impose the constraint that the
transition matrix is a valid stochastic matrix, that is: the non-negativity and sum-to-one
properties of its rows. We require that the transition matrix of the HMM is ergodic
(aperiodic and irreducible). This implies, first of all, that $\pi_\infty$ is the right
eigenvector of $P^T$ and therefore satisfies the condition $\pi_\infty = P^T \pi_\infty$,
and secondly, that $\pi_\infty$ has strictly positive entries. We therefore, also, include
in the optimization problem a polytopic bound $\polylb$ on $\pi_\infty$ such that for
a vector $x \in \polylb \Rightarrow x > 0$.\footnote{This polyhedron can, for example, be
    obtained if it is possible to \emph{a priori} lower bound the elements of the
    transition matrix $P$ using another matrix $L$. In particular, this is possible since
    then the stationary distribution $\pi_\infty$ lies in a polyhedron $\polylb$ spanned
    by the normalized (i.e., non-negative and with elements that sum to one) columns of
    the matrix $(I-L^T)^{-1}$ -- see \cite{courtois_polyhedra_1985} for details.}

To summarize, estimating the transition matrix $P$ involves solving the optimization
problem (as the limit is taken in equation \eref{eq:2nd_moments} towards stationarity):
\begin{align}
    \min_{\pi_\infty, P} & \quad \|\hat M_\infty - B^T \diag{\pi_\infty} PB \|_F^2 \notag \\
    \text{s.t.} & \quad P \geq 0, \;\;\; \quad \pii \geq 0, \notag \\
                & \quad P \mathds{1} = \mathds{1}, \quad \mathds{1}^T \pi_\infty = 1, \notag \\
                & \quad \pi_\infty \in \polylb, \quad \pi_\infty = P^T \pi_\infty.
    \label{eq:nl_matching} 
\end{align}
This is, in general, a non-convex optimization problem. The lemma below shows
that convex optimization techniques can be used to solve the problem.

\begin{lemma}
    The solution of problem \eref{eq:nl_matching} is obtainable by solving the convex
    problem
    \begin{align}
        \min_A \quad & \|\hat M_\infty - B^T A B\|_F^2 \notag \\
        \text{s.t.} \quad & A \geq 0, \mathds{1}^T A \mathds{1} = 1, \notag \\
                          & A \mathds{1} \in \underaccent{\bar}{\Pi},
        A\mathds{1} = A^T \mathds{1},
        \label{eq:convex_mom}
    \end{align}
    and using \eref{eq:recover_pi} and \eref{eq:recover_P}, see below, to recover
    $\pi_\infty$ and $P$ from the variable $A$.
    \label{thrm:convex_equivalence}
\end{lemma}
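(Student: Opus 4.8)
The plan is to introduce the change of variables $A = \diag{\pii} P$ and show that it defines an objective-preserving bijection between the feasible set of the non-convex problem \eref{eq:nl_matching} and that of the convex problem \eref{eq:convex_mom}; its inverse is $\pii = A\mathds{1}$, $P = \diag{A\mathds{1}}^{-1} A$, which are the recovery formulas \eref{eq:recover_pi}--\eref{eq:recover_P}. Since an objective-preserving bijection between feasible sets forces the optimal values to coincide and puts the minimizers in correspondence, the lemma follows. I would first record that \eref{eq:convex_mom} is indeed convex: the objective is the squared Frobenius norm (a convex function) composed with the affine map $A \mapsto \hat M_\infty - B^T A B$, while $A \geq 0$, $\mathds{1}^T A \mathds{1} = 1$ and $A\mathds{1} = A^T\mathds{1}$ are linear in $A$, and $A\mathds{1} \in \polylb$ is linear because $\polylb$ is a polyhedron and $A \mapsto A\mathds{1}$ is linear; hence \eref{eq:convex_mom} is a convex quadratic program.

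For the forward direction, take any feasible $(\pii, P)$ for \eref{eq:nl_matching} and set $A = \diag{\pii} P$. Then $B^T A B = B^T \diag{\pii} P B$, so the objective values agree; $A \geq 0$ because $\pii \geq 0$ and $P \geq 0$; using $P\mathds{1} = \mathds{1}$ gives $A\mathds{1} = \pii$, whence $\mathds{1}^T A \mathds{1} = \mathds{1}^T \pii = 1$ and $A\mathds{1} = \pii \in \polylb$; and $A^T\mathds{1} = P^T \diag{\pii} \mathds{1} = P^T \pii = \pii = A\mathds{1}$ by the stationarity constraint $\pii = P^T\pii$. So $A$ is feasible for \eref{eq:convex_mom} with the same objective value. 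For the reverse direction, take any feasible $A$ for \eref{eq:convex_mom} and set $\pii = A\mathds{1}$, $P = \diag{\pii}^{-1} A$. The constraint $A\mathds{1} \in \polylb$ forces $\pii > 0$ by the defining property of $\polylb$, so $\diag{\pii}$ is invertible and $P$ is well-defined. Then $P \geq 0$ since $\diag{\pii}^{-1} \geq 0$ and $A \geq 0$; $P\mathds{1} = \diag{\pii}^{-1} A\mathds{1} = \diag{\pii}^{-1}\pii = \mathds{1}$; $\mathds{1}^T\pii = \mathds{1}^T A\mathds{1} = 1$; $\pii \in \polylb$ by construction; and $P^T\pii = A^T\diag{\pii}^{-1}\pii = A^T\mathds{1} = A\mathds{1} = \pii$ using $A\mathds{1} = A^T\mathds{1}$. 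Finally $\diag{\pii} P = A$, so the objectives match. Since the two maps are mutually inverse, the claimed equivalence holds.

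The bulk of the argument is a routine matching of linear constraints; the only genuinely delicate point is the invertibility of $\diag{\pii}$ in the reverse direction, which is precisely why the polyhedral bound $\pii \in \polylb$ (rather than plain non-negativity of $\pii$) is imposed in \eref{eq:nl_matching}, and why it is propagated as $A\mathds{1} \in \polylb$ in \eref{eq:convex_mom}. I would also note in passing that the stationarity condition $\pii = P^T\pii$ of the original problem is exactly what becomes the symmetry-of-margins condition $A\mathds{1} = A^T\mathds{1}$ under the substitution, so no constraint is lost or gained in either direction.
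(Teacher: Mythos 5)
Your proposal is correct and follows essentially the same route as the paper: the substitution $A = \diag{\pi_\infty}P$ with inverse $\pi_\infty = A\mathds{1}$, $P = \diag{A\mathds{1}}^{-1}A$, an objective-preserving bijection verified constraint by constraint in both directions, with the polytopic bound $A\mathds{1}\in\polylb$ guaranteeing invertibility of $\diag{A\mathds{1}}$. The only (welcome) addition beyond the paper's argument is your explicit check that \eref{eq:convex_mom} is a convex quadratic program.
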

\begin{proof}
In problem \eref{eq:convex_mom}, we identify the product $\diag{\pii} P$ in
problem \eref{eq:nl_matching} as a new parameter $A$, i.e.,
\begin{equation}
    A = \diag{\pii} P,
    \label{eq:definition_A}
\end{equation}
and optimize over its elements instead of over $\pii$ and $P$ jointly.  Notice
that it is possible to recover $\pii$ and $P$ from $A$ as follows: Firstly,
recover $\pii$ from
\begin{equation}
    A \mathds{1} = \diag{\pii} P \mathds{1} = \pii,
    \label{eq:recover_pi}
\end{equation}
employing the fact that $P \mathds{1} = \mathds{1}$.
Secondly, recover $P$ from
\begin{equation}
    \diag{\pii}^{-1} A = \diag{\pii}^{-1} \diag{\pii} P = P.
    \label{eq:recover_P}
\end{equation}
The lemma follows by noting that the cost functions in problems
\eref{eq:nl_matching} and \eref{eq:convex_mom} are the same, and then mapping
feasible solutions between the two problems.
\end{proof}

Solving problem \eref{eq:convex_mom} requires only a single pass over the data to obtain
$\hat M_\infty$, and then solving a data-size independent convex (quadratic) optimization
problem to compute an estimate of the transition matrix $P$. The trade-off compared to ML
estimation\comment{ or direct likelihood maximization,}, which requires multiple iterations
over the observation data set, is of course between estimation accuracy and computational
cost: the method of moments outlined above employs only the second order moments and will
hence have disregarded some of the information in the observed data. 

\subsection*{Step 2. Single Newton-Raphson Step}

We propose to exploit the trade-off by first obtaining an estimate of $P$ using
the convex method of moments \eref{eq:convex_mom}, and then taking \emph{a
single} Newton-Raphson step on the likelihood function to increase the accuracy
of the estimate.

The (log-)likelihood function of the observed data is
\begin{equation}
    l_{N}(\theta) = \log \Pr{ \, \{y_k\}_{k=0}^N \, | \, x_0 \sim \pi_0; \, \theta \,},
\end{equation}
where $\theta$ is a parametrization of the transition matrix $P$. Denote the estimate
resulting from the method of moments \eref{eq:convex_mom} as $\thetaMM$. Then a single
Newton-Raphson step is performed as follows:\footnote{We assume that parametrization handles the
constraints, if not, then the Newton-Raphson step can be formulated as a constrained
quadratic program.}
\begin{equation}
    \thetaNR = \thetaMM - \big[ \hesslik(\thetaMM)\big]^{-1} \; \gradlik(\thetaMM),
    \label{eq:newton_step}
\end{equation}
where the gradient $\nabla_\theta l_N(\hat \theta)$ and Hessian $\nabla_\theta^2 l_N(\hat
\theta)$ can be computed recursively -- see e.g., \cite{lystig_exact_2002,
cappe_recursive_2005, cappe_inference_2005, turner_direct_2008, khreich_survey_2012,
macdonald_numerical_2014}.

Compared to direct maximization of the likelihood function using the
Newton-Raphson method (see, e.g., \cite{cappe_inference_2005,
turner_direct_2008}), this procedure is non-iterative and hence, the gradient
and Hessian \emph{need only to be computed once}.

\comment{We will in the next section show that this
yields an asymptotically efficient estimator, and in the subsequent section
show by numerical simulations that the computational burden is lower than for
EM.}

\section{Analysis}
\label{sec:analysis}

In this section we analyze the properties of the proposed algorithm. First we state the
assumptions.

\begin{assumption}
    The transition matrix $P$ has positive elements. The observation matrix $B$
    is given, has full rank and is positive. There is a polytopic bound on
    $\pii$ such that all components of $\pii$ are strictly greater than zero.
\end{assumption}

The following lemma establishes (strong) consistency of the method of moments procedure.
\begin{lemma}
    The estimates of $P$ and $\pi_\infty$ obtained using \eref{eq:recover_pi}
    and \eref{eq:recover_P} from problem \eqref{eq:convex_mom} with the
    estimator $\hat M_\infty$ in \eref{eq:moment_estimator}, converge to their
    true values as the number of observations $N \rightarrow \infty$ with
    probability one.
    \label{thrm:convergence_P_pi}
\end{lemma}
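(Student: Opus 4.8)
The plan is to proceed in three stages: (i) show that the empirical moment matrix $\hat M_\infty$ converges almost surely to the true stationary moment matrix $M_\infty := B^T \diag{\pii} P B$; (ii) transfer this convergence through the convex program \eqref{eq:convex_mom} to conclude that the minimizer $\hat A_N$ converges a.s.\ to the true $A^\star = \diag{\pii} P$; and (iii) push this through the continuous recovery maps \eqref{eq:recover_pi}–\eqref{eq:recover_P} to get a.s.\ convergence of the estimates of $\pii$ and $P$.

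For step (i), note that $\{(y_k, y_{k+1})\}_k$ is a function of the Markov chain $\{x_k\}$, which is ergodic by Assumption~1 (positive, hence irreducible and aperiodic, transition matrix). The pair process is itself a functional of a finite ergodic Markov chain and therefore satisfies the strong law of large numbers (ergodic theorem for Markov chains), so each entry $[\hat M_\infty]_{ij} = \frac{1}{N}\sum_{k=0}^{N-1}\ind{y_k=i,\,y_{k+1}=j} \to \Pr{y_k=i,\,y_{k+1}=j}$ almost surely as $N\to\infty$. Taking the limit towards stationarity in \eqref{eq:2nd_moments}, the right-hand side limit is exactly $[M_\infty]_{ij}$, so $\hat M_\infty \to M_\infty$ a.s.\ (equivalently, $\|\hat M_\infty - M_\infty\|_F \to 0$ a.s.).

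For step (ii) — the part I expect to be the main obstacle — the key point is identifiability: the true parameter $A^\star$ must be the \emph{unique} feasible point of \eqref{eq:convex_mom} achieving zero objective when $\hat M_\infty$ is replaced by $M_\infty$. Uniqueness follows because $B$ has full rank (Assumption~1): the map $A \mapsto B^T A B$ is injective on $\mathbb{R}^{X\times X}$, so $B^T A B = B^T A^\star B$ forces $A = A^\star$; and $A^\star$ is feasible because $A^\star \geq 0$, $\mathds{1}^T A^\star \mathds{1} = \mathds{1}^T \pii = 1$, $A^\star \mathds{1} = \pii \in \polylb$, and $A^{\star T}\mathds{1} = P^T\pii = \pii = A^\star\mathds{1}$. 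Given uniqueness of the zero-cost solution and continuity of the quadratic objective in both $A$ and the data $\hat M_\infty$, together with compactness of the feasible set (it is closed and bounded since $A \geq 0$ and $\mathds{1}^T A \mathds{1} = 1$), a standard argument for M-estimators (e.g.\ a Wald-type consistency / maximum theorem argument) gives $\hat A_N \to A^\star$ a.s.: any limit point of $\hat A_N$ is feasible and achieves objective value $\lim \|\hat M_\infty - B^T\hat A_N B\|_F^2 \leq \lim \|\hat M_\infty - B^T A^\star B\|_F^2 = 0$, hence equals $A^\star$ by uniqueness.

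For step (iii), by Assumption~1 the true $\pii$ has strictly positive entries, and since $\hat A_N\mathds{1} \to A^\star\mathds{1} = \pii$ componentwise, the recovered $\hat\pii = \hat A_N\mathds{1}$ converges a.s.\ to $\pii$, and for $N$ large enough $\hat\pii > 0$ so $\diag{\hat\pii}$ is invertible; then $\hat P = \diag{\hat\pii}^{-1}\hat A_N \to \diag{\pii}^{-1} A^\star = P$ a.s.\ by continuity of matrix inversion on the set of invertible diagonal matrices. The main technical care needed is in step (ii), making the "limit point of minimizers is a minimizer of the limit problem" argument rigorous — in particular invoking the full-rank assumption on $B$ at exactly the right place to secure uniqueness, and handling the almost-sure (rather than in-probability) mode of convergence uniformly over the compact feasible set.
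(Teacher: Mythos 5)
Your three-stage decomposition (convergence of $\hMii$, transfer through the optimization problem, recovery of $P$ and $\pi_\infty$) matches the paper's proof outline exactly, and your argument is correct; the interesting difference is in how you execute the middle step. The paper establishes uniqueness of the minimizer via \emph{strict convexity} of the cost -- the Hessian $2(B\otimes B)(B\otimes B)^T$ is positive definite because $B^T\otimes B^T$ has full column rank -- and then obtains convergence of minimizers by first upgrading pointwise a.s.\ convergence of the convex costs to uniform convergence on the compact feasible set (Rockafellar, Thm.~10.8) and then invoking an M-estimation lemma (Campi--Weyer). You instead argue via \emph{identifiability}: injectivity of $A\mapsto B^TAB$ (the same full-rank fact, used one level earlier) makes $A^\star$ the unique feasible zero of the limit cost, and a compactness/subsequence argument -- every limit point of $\hat A_N$ is feasible and has zero limit cost, hence equals $A^\star$ -- closes the step without needing uniform convergence or convexity at all. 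Your route is more elementary and self-contained for this finite-dimensional problem with a fixed compact feasible set; the paper's route is the one that generalizes (and is reused in the paper's efficiency proof, where strict convexity also delivers the quantitative QP perturbation bound of Daniel that propagates the $\mathcal{O}_p(N^{-1/2})$ rate). Two minor points of care: in step (i) the paper makes the ergodic-theorem application rigorous by explicitly lumping into the chain $z_k=(y_k,y_{k+1},x_{k+1})$ and verifying its irreducibility and aperiodicity, which is the precise justification behind your ``functional of an ergodic chain'' remark; and your injectivity claim implicitly requires $Y\ge X$ so that the full-rank $B$ has a left inverse acting on $B^T$ from the left and on $B$ from the right -- the same implicit requirement as the paper's $\rank{B^T\otimes B^T}=X^2$ with a $Y^2\times X^2$ matrix.
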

\begin{proof}[Proof (outline)]
The lemma follows by showing
\begin{enumerate}
    \item that the estimate $\hat M_\infty$ converges to $M_\infty$ (using a 
        law of large numbers, \cite[Theorem 14.2.53]{cappe_inference_2005});
    \item that the solution $\hat A$ of the optimization problem converges to
        $A$ (follows by the fundamental theorem of statistical learning
        \cite[Lemma 1.1]{campi_system_2006} and the convexity of the cost
        function \cite[Theorem 10.8]{rockafellar_convex_1970});
    \item that the solution of the optimization problem $\hat A$ can be
        uniquely mapped to $P$ and $\pi_\infty$.
\end{enumerate}
Full details are available in the supplementary material.
\end{proof}

\begin{figure*}[ht!]
    \centering
\begin{subfigure}[]{0.49\linewidth}
    \centering
    \begin{tikzpicture}[xscale=1.05]
      \begin{axis}[
          width=\linewidth, 
          grid=major, 
          grid style={dashed,gray!30}, 
          xlabel=samples, 
          ylabel=RMSE,
          x post scale=0.9, 
          y post scale=0.51, 
          xmode=log,
          ymode=log,
          legend pos=north east,
          legend style={font=\fontsize{6}{5}\selectfont}, 
          cycle list name=black white,
        ]

        \addplot table[x=N,y=mom,col sep=comma]{data/benchmark_X5_Y5_median.csv}; 
        \addplot table[x=N,y=em,col sep=comma]{data/benchmark_X5_Y5_median.csv}; 
        \addplot table[x=N,y=em_mom,col sep=comma]{data/benchmark_X5_Y5_median.csv}; 
        \addplot table[x=N,y=em_true,col sep=comma]{data/benchmark_X5_Y5_median.csv}; 
        \addplot table[x=N,y=newton,col sep=comma]{data/benchmark_X5_Y5_median.csv}; 

        \legend{MM, EM, EM-MM, EM-True, 2S}
      \end{axis}
    \end{tikzpicture}
    \label{fig:accuracy}
\end{subfigure}
\begin{subfigure}[]{0.49\linewidth}
    \centering
    \vspace{-0.25cm}
    \begin{tikzpicture}[xscale=1.05]
      \begin{axis}[
          width=\linewidth, 
          grid=major, 
          grid style={dashed,gray!30}, 
          xlabel=samples, 
          ylabel={time [sec]},
          x post scale=0.9, 
          y post scale=0.51, 
          xmode=log,
          ymode=log,
          ymax=1e3, 
          legend pos=north west,
          legend style={font=\fontsize{6}{5}\selectfont}, 
          cycle list name=black white,
        ]

        \addplot table[x=N,y=mom_time,col sep=comma]{data/benchmark_X5_Y5_median.csv}; 
        \addplot table[x=N,y=em_time,col sep=comma]{data/benchmark_X5_Y5_median.csv}; 
        \addplot table[x=N,y=em_mom_time,col sep=comma]{data/benchmark_X5_Y5_median.csv}; 
        \addplot table[x=N,y=em_true_time,col sep=comma]{data/benchmark_X5_Y5_median.csv}; 
        \addplot table[x=N,y=newton_time,col sep=comma]{data/benchmark_X5_Y5_median.csv}; 

      \end{axis}
    \end{tikzpicture}
    \label{fig:runtime}
\end{subfigure}
    \caption{RMSE and run-time simulation data.}
    \label{fig:numerical}
\end{figure*}

Next, we provide the main theorem of the letter.
\begin{theorem}
    The estimate $\thetaNR$ obtained by the two-step algorithm
    \eref{eq:convex_mom}-\eref{eq:newton_step} is asymptotically efficient, i.e., as $N
    \to \infty$,
    \begin{equation}
        \sqrt{N}(\thetaNR - \theta^*) \to_d \mathcal{N}(0, \Fisher^{-1}(\theta^*)),
    \end{equation}
    where $\mathcal{N}$ is a normal distribution, $\theta^*$ corresponds to the
    true parameters and $\Fisher$ is the Fisher information matrix.
    \label{thrm:as_eff}
\end{theorem}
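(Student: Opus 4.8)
The plan is to follow the classical route: a single Newton--Raphson step started from a $\sqrt N$-consistent preliminary estimator produces an estimator that is first-order equivalent to the maximum likelihood estimator, hence asymptotically efficient. Three facts will be needed. (i) The method-of-moments estimate is not only strongly consistent (\lref{thrm:convergence_P_pi}) but $\sqrt N$-consistent, $\sqrt N(\thetaMM - \theta^*) = \Op(1)$. (ii) The rescaled score obeys a central limit theorem, $\tfrac{1}{\sqrt N}\gradlik(\theta^*) \to_d \mathcal{N}(0,\Fisher(\theta^*))$. (iii) A uniform law of large numbers holds for the observed information over a neighbourhood $\Theta_0$ of $\theta^*$, $\sup_{\theta\in\Theta_0}\|\tfrac1N\hesslik(\theta) + \Fisher(\theta)\| \to_p 0$, with $\Fisher(\cdot)$ continuous and $\Fisher(\theta^*)$ nonsingular. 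Under Assumption~1 (ergodic $P$, positive full-rank $B$), facts (ii)--(iii) are the standard large-sample theory of the HMM log-likelihood \cite{cappe_inference_2005}; nonsingularity of $\Fisher(\theta^*)$ is local identifiability, which again follows from ergodicity together with the full-rank hypothesis on $B$.

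Granting these, the remaining computation is short. Writing $\hat H_N = \tfrac1N\hesslik(\thetaMM)$ and $\bar H_N = \int_0^1 \tfrac1N\hesslik\big(\theta^* + t(\thetaMM - \theta^*)\big)\,dt$, a first-order Taylor expansion of the gradient in \eref{eq:newton_step} with integral remainder gives $\tfrac1N\gradlik(\thetaMM) = \tfrac1N\gradlik(\theta^*) + \bar H_N(\thetaMM - \theta^*)$. Substituting into \eref{eq:newton_step}, subtracting $\theta^*$, and multiplying by $\sqrt N$ yields
\begin{equation}
\begin{split}
    \sqrt N(\thetaNR - \theta^*) = {}&-\hat H_N^{-1}\,\tfrac{1}{\sqrt N}\gradlik(\theta^*) \\
    &+ \big(I - \hat H_N^{-1}\bar H_N\big)\,\sqrt N(\thetaMM - \theta^*).
\end{split}
\label{eq:nr_decomp}
\end{equation}

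Taking limits in \eref{eq:nr_decomp} finishes the proof. Since $\thetaMM \to_p \theta^*$, fact (iii) and the continuous mapping theorem give $\hat H_N \to_p -\Fisher(\theta^*)$, and the same for $\bar H_N$ (its integrand equals $\tfrac1N\hesslik$ evaluated at points converging to $\theta^*$, uniformly in $t\in[0,1]$, so dominated convergence applies); hence $\hat H_N^{-1} \to_p -\Fisher^{-1}(\theta^*)$ and $I - \hat H_N^{-1}\bar H_N \to_p 0$. By fact (ii) and Slutsky's theorem, the first term of \eref{eq:nr_decomp} converges in distribution to $\Fisher^{-1}(\theta^*)\,\mathcal{N}(0,\Fisher(\theta^*)) = \mathcal{N}(0,\Fisher^{-1}(\theta^*))$; the second term is an $\op(1)$ matrix multiplying the $\Op(1)$ vector from fact (i), hence $\op(1)$. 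A final application of Slutsky's theorem gives $\sqrt N(\thetaNR - \theta^*) \to_d \mathcal{N}(0,\Fisher^{-1}(\theta^*))$.

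The step I expect to be the main obstacle is fact (i), since \lref{thrm:convergence_P_pi} delivers only almost-sure convergence while the one-step argument needs the $\sqrt N$ rate. I would obtain it by propagating the rate through the construction of $\thetaMM$. First, the empirical moments \eref{eq:moment_estimator} form a normalized sum of indicators of the geometrically ergodic process $(y_k,y_{k+1})$ and therefore obey a central limit theorem, so $\hat M_\infty - M_\infty = \Op(N^{-1/2})$ \cite{cappe_inference_2005}. Second, at the true parameters the constraint $A \geq 0$ in \eref{eq:convex_mom} is strictly inactive, since $A = \diag{\pii}P > 0$ under Assumption~1, and $\hat M_\infty$ enters \eref{eq:convex_mom} only through the linear part of the objective (strongly convex, by the full-rank hypothesis on $B$) over a fixed polyhedron; hence the minimizer map $\hat M_\infty \mapsto \hat A$ is locally Lipschitz --- a standard stability property of quadratic programs --- so $\hat A - A^* = \Op(N^{-1/2})$. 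Third, the recovery maps \eref{eq:recover_pi}--\eref{eq:recover_P}, $\hat\pi_\infty = \hat A\ones$ and $\hat P = \diag{\hat A\ones}^{-1}\hat A$, are continuously differentiable in $\hat A$ near $A^*$ because $\pii > 0$. Composing these, $\sqrt N(\thetaMM - \theta^*) = \Op(1)$. The residual regularity --- differentiability of $l_N$ in the chosen parametrization and invertibility of $\hesslik(\thetaMM)$ with probability tending to one, so that \eref{eq:newton_step} is well defined --- is routine given fact (iii) and a smooth, full-rank parametrization.
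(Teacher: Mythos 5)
Your proposal is correct and follows essentially the same route as the paper: a moment CLT propagated through the quadratic program's perturbation stability (the paper cites Daniel's Theorem 2.1 where you invoke local Lipschitz continuity of the QP solution map) and the smooth recovery maps to get $\sqrt{N}$-consistency of $\thetaMM$, the score CLT and observed-information LLN from C\'app\'e et al., and the classical one-step Newton--Raphson expansion. Your integral-remainder Taylor form and the explicit uniform LLN over a neighbourhood are slightly more careful renderings of the same argument the paper carries out in Lemma~\ref{lemma:newton_efficient}.
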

\begin{proof}[Proof (outline)]
    The theorem follows by showing that
    \begin{enumerate}
        \item the estimate $\hat M_\infty$ follows a central limit theorem
            \cite[Corollary 5]{jones_markov_2004}, and
            using this, concluding that $\hMii = M_\infty
            + \mathcal{O}_p(N^{-1/2})$ \cite[Appendix A]{pollard_convergence_1984};
        \item this order in probability can be propagated through the
            optimization problem \eref{eq:convex_mom} to obtain a similar order
            on $\hat P$ and $\hat \pi_\infty$ \cite[Theorem
            2.1]{daniel_stability_1973};
        \item verifying that certain regularity conditions hold to ensure
            that we have a central limit theorem for the gradient and
            a law of large numbers for the Hessian matrix of the log-likelihood
            function \cite[Theorems 12.5.5 and 12.5.6]{cappe_inference_2005};
        \item verifying by explicit computation that the single Newton-Raphson
            step yields an asymptotically efficient estimator.
    \end{enumerate}
    Again, full details are available in the supplementary material.

\end{proof}

\section{Numerical Evaluation}
\label{sec:numerical_results}

In this section, we evaluate the performance of the proposed two-step algorithm and
compare it to the standard EM algorithm for ML estimation. The EM implementation of Matlab
R2015a was employed (modified as to account for the fact that the observation matrix is
assumed known). The first step of the proposed algorithm, i.e., solving the convex
optimization problem \eref{eq:convex_mom}, was performed using the CVX package
\cite{grant_cvx_2014}. The second step, i.e., the single Newton-Raphson update
\eref{eq:newton_step}, can be implemented in (at least) two ways. The first is to
recursively compute the gradient and Hessian as explained in, e.g.,
\cite{lystig_exact_2002, cappe_recursive_2005, cappe_inference_2005, turner_direct_2008,
khreich_survey_2012, macdonald_numerical_2014}. The second, and the one we opted for, is
to use \emph{automatic differentiation} (AD, e.g., \cite{griewank_evaluating_2008}). We
interfaced Matlab to the ForwardDiff.jl-package in Julia \cite{revels_forward-mode_2016}
in our implementation. A small regularization term was added to the Hessian. Each
simulation was run on an Intel Xeon CPU at 3.1 GHz.

\comment{
    We used random systems with the transition and observation matrices generated
    as
    \begin{equation}
        P = \text{normalize\_rows}\big\{I + \frac{1}{X} \; \mathcal{U}[0,1]\big\}
    \end{equation}
    and
    \begin{equation}
    B = \text{normalize\_rows}\big\{\begin{bmatrix}I & 0 \end{bmatrix} + \frac{1}{Y} \; \mathcal{U}[0,1]\big\},
    \end{equation}
    where $I$ is the identity matrix, normalize\_rows makes sure that the row sum
    is one and $\mathcal{U}$ denotes (a matrix of) uniformly distributed numbers.
    For the sake of robustness, we used as an elementwise lower bound one tenth of
    the minimum element of the true stationary distribution of each system.
}

\begin{figure}[b!]
\centering
\includegraphics{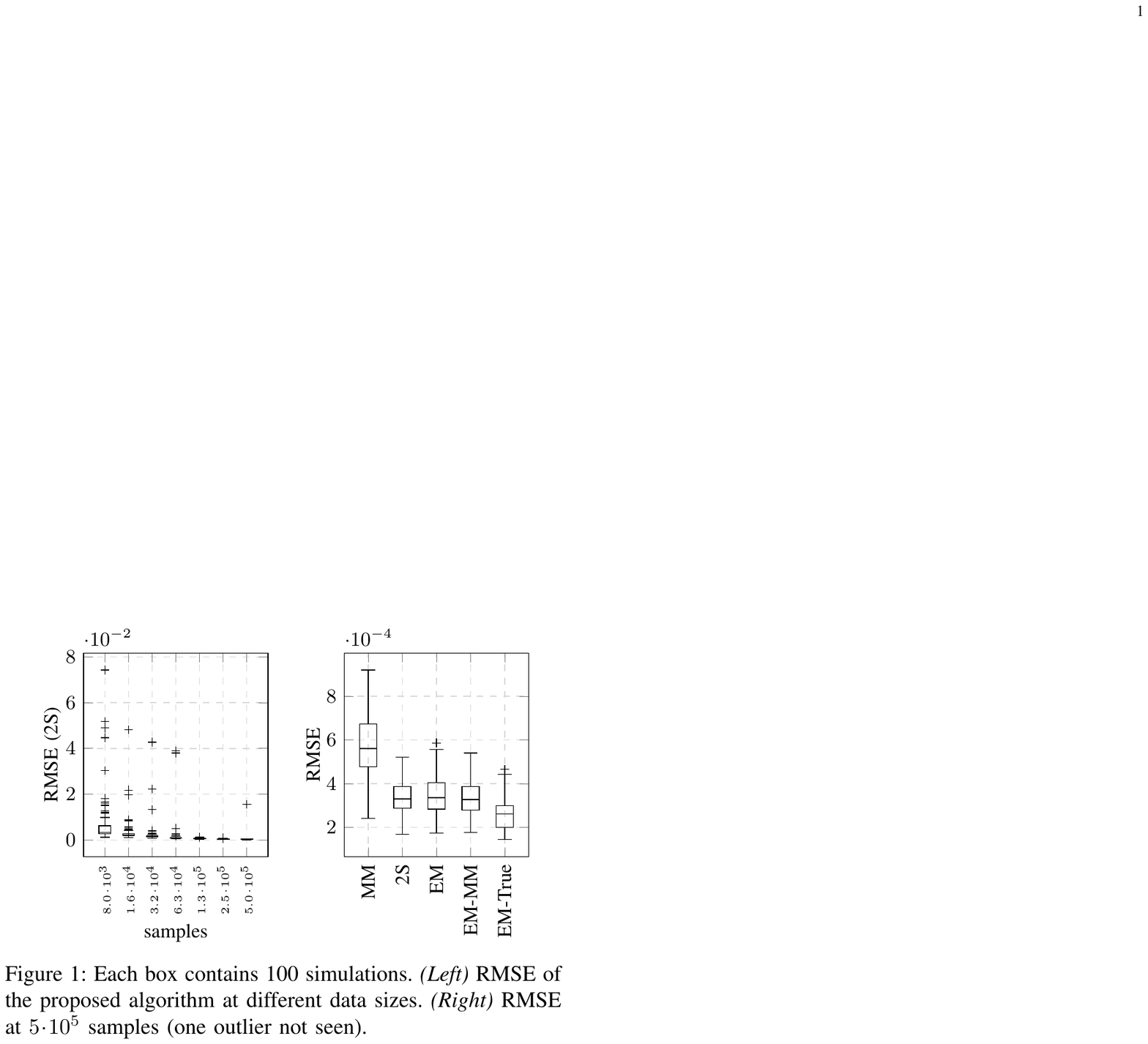}
\caption{Each box contains 100 simulations. \emph{(Left)} RMSE of the proposed algorithm
at different data sizes. \emph{(Right)} RMSE at $5\!\cdot\!10^5$ samples (one outlier not
seen).}
  \label{fig:box_plots}
\end{figure}

We sampled observations from randomly generated systems of size $X = Y = 5$.  Notice that
there are a total of 20 unknown parameters (i.e., elements of $P$) to estimate for such
systems.  We used an elementwise lower bound $\polylb$ of one tenth of the minimum element
of the true stationary distribution of each system. We compared the performance of the
proposed two-step algorithm (2S), to the estimate resulting from the method of moments
(MM), as well as, the EM algorithm started in three different initial points: a random
point (EM), the method of moments estimate (EM-MM) and the true parameter values
(EM-True). 

\fref{fig:numerical} presents the median over 100 simulations for each batch size of,
left, the root mean squared error (RMSE) and, right, the run-time. \comment{We stress the
logarithmic scale being used.}  \fref{fig:box_plots} presents box plots of, left, the
RMSEs of the proposed algorithm at various data sizes and, right, the RMSEs of the
compared algorithms for $5\cdot 10^5$ samples. All boxes contain 100 simulations. Three
things can be noted from the figures. 

Firstly, in the left plot of \fref{fig:numerical}, the loss of accuracy resulting from
only using the second order moments (compared to all moments in EM) is apparent from the
distance between the MM-curve and the EM-curves. This can also be seen in the right plot
of \fref{fig:box_plots}.

Secondly, also in the left plot of \fref{fig:numerical}, we see that the asymptotics
become valid around $10^5$ samples which takes the estimate resulting from the proposed
two-step method down to the accuracy of EM. The same conclusion is indicated by the left
plot of \fref{fig:box_plots}, where the number of observed outliers drop. These occurred
when the Hessian was not negative definite -- a result of the initial estimate not being
sufficiently close to the maximum of the likelihood function. Note that this can be
detected prior to employing the method.

Thirdly, in the right plot of \fref{fig:numerical}, it can be see that the run-times of
the compared algorithms differ by up to an order of magnitude. It should moreover be noted
that the run-time of the proposed algorithm is more or less constant for a fixed data size
(i.e., independent of the system and the observations), whereas the run-time of EM is
highly dependent on the data (due to the number of iterations needed to converge): The
maximum run-times for $5 \cdot 10^5$ observations were 1083, 480, 166 seconds for EM,
EM-MM and EM-True, respectively, whereas for the proposed method it was 54 seconds.

\comment{
We compare the algorithm to the Matlab implementation of the EM algorithm for
HMMs. Since EM is a local search algorithm, the choice of starting point is
crucial, in two senses: we are only guaranteed to converge locally, and
depending on the shape of the likelihood surface, convergence might be slow
(i.e. require many iterations). Also, each iteration of EM is computationally heavy
when the batch size of observations is large. The method of moments estimate is
obtained using only one pass over the data and then by solving a convex
optimization problem whose complexity is independent of the number of samples
(it depends only on the dimension of the system). 
}


\section{Conclusion}
\label{sec:conclusion}

This letter has proposed and analyzed a two-step algorithm for identification
of HMMs with known sensor uncertainties. A method of moments was combined with
direct likelihood maximization to exploit the benefits of both approaches:
lower computational cost and consistency in the former, and accuracy in the
later. Theoretical guarantees were given for asymptotic efficiency and
numerical simulations showed that the algorithm can yield the same accuracy as
the standard EM algorithm, but in up to an order of magnitude less time.

\ifCLASSOPTIONcaptionsoff
  \newpage
\fi




\clearpage

\balance

\bibliographystyle{IEEEtran}
\bibliography{IEEEabrv,../rob_references}

\begin{thebibliography}{10}
\providecommand{\url}[1]{#1}
\csname url@samestyle\endcsname
\providecommand{\newblock}{\relax}
\providecommand{\bibinfo}[2]{#2}
\providecommand{\BIBentrySTDinterwordspacing}{\spaceskip=0pt\relax}
\providecommand{\BIBentryALTinterwordstretchfactor}{4}
\providecommand{\BIBentryALTinterwordspacing}{\spaceskip=\fontdimen2\font plus
\BIBentryALTinterwordstretchfactor\fontdimen3\font minus
  \fontdimen4\font\relax}
\providecommand{\BIBforeignlanguage}[2]{{%
\expandafter\ifx\csname l@#1\endcsname\relax
\typeout{** WARNING: IEEEtran.bst: No hyphenation pattern has been}%
\typeout{** loaded for the language `#1'. Using the pattern for}%
\typeout{** the default language instead.}%
\else
\language=\csname l@#1\endcsname
\fi
#2}}
\providecommand{\BIBdecl}{\relax}
\BIBdecl

\bibitem{krishnamurthy_partially_2016}
V.~Krishnamurthy, \emph{Partially Observed Markov Decision Processes}.\hskip
  1em plus 0.5em minus 0.4em\relax Cambridge University Press, 2016.

\bibitem{durbin_biological_1998}
R.~Durbin, Ed., \emph{Biological sequence analysis: Probabalistic models of
  proteins and nucleic acids}.\hskip 1em plus 0.5em minus 0.4em\relax Cambridge
  University Press, 1998.

\bibitem{vidyasagar_hidden_2014}
M.~Vidyasagar, \emph{Hidden {Markov} Processes: Theory and Applications to
  Biology}.\hskip 1em plus 0.5em minus 0.4em\relax Princeton University Press,
  2014.

\bibitem{rabiner_tutorial_1989}
L.~Rabiner, ``A tutorial on hidden {Markov} models and selected applications in
  speech recognition,'' \emph{Proceedings of the IEEE}, vol.~77, no.~2, pp.
  257--286, Feb. 1989.

\bibitem{cappe_inference_2005}
O.~Capp\'e, E.~Moulines, and T.~Ryd\'en, \emph{Inference in Hidden {Markov}
  Models}.\hskip 1em plus 0.5em minus 0.4em\relax Springer, 2005.

\bibitem{dempster_maximum_1977}
A.~P. Dempster, N.~M. Laird, and D.~B. Rubin, ``Maximum likelihood from
  incomplete data via the {EM} algorithm,'' \emph{Journal of the Royal
  Statistical Society. Series B (Methodological)}, vol.~39, pp. 1--38, 1977.

\bibitem{kay_fundamentals_1993}
S.~M. Kay, \emph{Fundamentals of Statistical Signal Processing: Estimation
  Theory}.\hskip 1em plus 0.5em minus 0.4em\relax Upper Saddle River, NJ, USA:
  Prentice-Hall, Inc., 1993.

\bibitem{lakshminarayanan_non-negative_2010}
B.~Lakshminarayanan and R.~Raich, ``Non-negative matrix factorization for
  parameter estimation in hidden {Markov} models,'' in \emph{{Proceedings} of
  the {IEEE} {International} {Workshop} on {Machine} {Learning} for {Signal}
  {Processing} (MLSP'10)}, 2010, pp. 89--94.

\bibitem{anandkumar_method_2012}
A.~Anandkumar, D.~Hsu, and S.~M. Kakade, ``A method of moments for mixture
  models and hidden {Markov} models,'' in \emph{{Proceedings} of the 25th
  {Conference} on {Learning} {Theory} (COLT'12)}, 2012, pp. 33.1--33.34.

\bibitem{hsu_spectral_2012}
D.~Hsu, S.~M. Kakade, and T.~Zhang, ``\BIBforeignlanguage{en}{A spectral
  algorithm for learning hidden {Markov} models},''
  \emph{\BIBforeignlanguage{en}{Journal of Computer and System Sciences}},
  vol.~78, no.~5, pp. 1460--1480, Sep. 2012.

\bibitem{kontorovich_learning_2013}
A.~Kontorovich, B.~Nadler, and R.~Weiss, ``On learning parametric-output
  {HMMs},'' in \emph{{Proceedings} of the 30th {International} {Conference} on
  {Machine} {Learning} (ICML'13)}, vol.~28, 2013, pp. 702--710.

\bibitem{anandkumar_tensor_2014}
A.~Anandkumar, R.~Ge, D.~Hsu, S.~M. Kakade, and M.~Telgarsky, ``Tensor
  decompositions for learning latent variable models,'' \emph{Journal of
  Machine Learning Research}, vol.~15, no.~1, pp. 2773--2832, 2014.

\bibitem{mattila_recursive_2015}
R.~Mattila, V.~Krishnamurthy, and B.~Wahlberg, ``Recursive identification of
  chain dynamics in hidden {M}arkov models using non-negative matrix
  factorization,'' in \emph{{Proceedings} of the 54th {IEEE} {Conference} on
  {Decision} and {Control} (CDC'15)}, 2015, pp. 4011--4016.

\bibitem{subakan_method_2015}
Y.~C. Subakan, J.~Traa, P.~Smaragdis, and D.~Hsu, ``Method of moments learning
  for left-to-right hidden {Markov} models,'' in \emph{{Proceedings} of the
  {IEEE} {Workshop} on {Applications} of {Signal} {Processing} to {Audio} and
  {Acoustics} (WASPAA'15)}, 2015, pp. 1--5.

\bibitem{meilijson_fast_1989}
I.~Meilijson, ``A fast improvement to the {EM} algorithm on its own terms,''
  \emph{Journal of the Royal Statistical Society. Series B (Methodological)},
  pp. 127--138, 1989.

\bibitem{fessler_space-alternating_1994}
J.~A. Fessler and A.~O. Hero, ``Space-alternating generalized
  expectation-maximization algorithm,'' \emph{IEEE Transactions on Signal
  Processing}, vol.~42, no.~10, pp. 2664--2677, 1994.

\bibitem{lystig_exact_2002}
T.~C. Lystig and J.~P. Hughes, ``\BIBforeignlanguage{en}{Exact computation of
  the observed information matrix for hidden {Markov} models},''
  \emph{\BIBforeignlanguage{en}{Journal of Computational and Graphical
  Statistics}}, vol.~11, no.~3, pp. 678--689, Sep. 2002.

\bibitem{cappe_recursive_2005}
O.~Capp\'e and E.~Moulines, ``Recursive computation of the score and observed
  information matrix in hidden {Markov} models,'' in \emph{{Proceedings} of the
  13th {IEEE} {Workshop} on {Statistical} {Signal} {Processing}}, 2005, pp.
  703--708.

\bibitem{turner_direct_2008}
R.~Turner, ``\BIBforeignlanguage{en}{Direct maximization of the likelihood of a
  hidden {Markov} model},'' \emph{\BIBforeignlanguage{en}{Computational
  Statistics \& Data Analysis}}, vol.~52, no.~9, pp. 4147--4160, May 2008.

\bibitem{khreich_survey_2012}
W.~Khreich, E.~Granger, A.~Miri, and R.~Sabourin, ``\BIBforeignlanguage{en}{A
  survey of techniques for incremental learning of {HMM} parameters},''
  \emph{\BIBforeignlanguage{en}{Information Sciences}}, vol. 197, pp. 105--130,
  Aug. 2012.

\bibitem{macdonald_numerical_2014}
I.~L. MacDonald, ``Numerical maximisation of likelihood: A neglected
  alternative to {EM}?'' \emph{International Statistical Review}, vol.~82,
  no.~2, pp. 296--308, Aug. 2014.

\bibitem{courtois_polyhedra_1985}
P.-J. Courtois and P.~Semal, ``On polyhedra of {P}erron-{F}robenius
  eigenvectors,'' \emph{Linear algebra and its applications}, vol.~65, pp.
  157--170, 1985.

\bibitem{campi_system_2006}
\BIBentryALTinterwordspacing
M.~Campi, ``System identification and the limits of learning from data,'' 2006.
  [Online]. Available:
  \url{http://marco-campi.unibs.it/pdf-pszip/sys-id-and-limits-learning.pdf}
\BIBentrySTDinterwordspacing

\bibitem{rockafellar_convex_1970}
R.~T. Rockafellar, \emph{Convex Analysis}.\hskip 1em plus 0.5em minus
  0.4em\relax Princeton University Press, 1970.

\bibitem{jones_markov_2004}
G.~L. Jones, ``\BIBforeignlanguage{en}{On the {Markov} chain central limit
  theorem},'' \emph{\BIBforeignlanguage{en}{Probability Surveys}}, vol.~1, pp.
  299--320, 2004.

\bibitem{pollard_convergence_1984}
D.~Pollard, \emph{Convergence of {Stochastic} {Processes}}.\hskip 1em plus
  0.5em minus 0.4em\relax Springer, 1984.

\bibitem{daniel_stability_1973}
J.~W. Daniel, ``Stability of the solution of definite quadratic programs,''
  \emph{Mathematical Programming}, vol.~5, no.~1, pp. 41--53, 1973.

\bibitem{grant_cvx_2014}
M.~Grant and S.~Boyd, ``{CVX}: Matlab software for disciplined convex
  programming, version 2.1,'' \url{http://cvxr.com/cvx}, Mar. 2014.

\bibitem{griewank_evaluating_2008}
A.~Griewank and A.~Walther, \emph{Evaluating Derivatives: Principles and
  Techniques of Algorithmic Differentiation}, 2nd~ed.\hskip 1em plus 0.5em
  minus 0.4em\relax Philadelphia, PA, USA: Society for Industrial and Applied
  Mathematics, 2008.

\bibitem{revels_forward-mode_2016}
\BIBentryALTinterwordspacing
J.~{Revels}, M.~{Lubin}, and T.~{Papamarkou}, ``Forward-mode automatic
  differentiation in {Julia},'' \emph{arXiv:1607.07892 [cs.MS]}, 2016.
  [Online]. Available: \url{https://arxiv.org/abs/1607.07892}
\BIBentrySTDinterwordspacing

\bibitem{boyd_convex_2004}
S.~Boyd and L.~Vandenberghe, \emph{Convex Optimization}.\hskip 1em plus 0.5em
  minus 0.4em\relax New York, NY: Cambridge University Press, 2004.

\bibitem{horn_topics_1991}
R.~A. Horn and C.~R. Johnson, \emph{Topics in matrix analysis}.\hskip 1em plus
  0.5em minus 0.4em\relax Cambridge University Press, 1991.

\bibitem{pollard_asymptotics_1993}
N.~L. Hjort and D.~Pollard, ``Asymptotics for minimisers of convex processes,''
  Tech. Rep., 1993.

\bibitem{gourieroux_statistics_1995}
C.~Gourieroux and A.~Monfort, \emph{Statistics and Econometric Models}.\hskip
  1em plus 0.5em minus 0.4em\relax Cambridge University Press, 1995, vol.~2.

\bibitem{vaart_asymptotic_1998}
A.~W. v.~d. Vaart, \emph{Asymptotic Statistics}.\hskip 1em plus 0.5em minus
  0.4em\relax Cambridge University Press, 1998.

\end{thebibliography}

\clearpage
\nobalance

\clearpage

\appendices

Here, we provide details of the proofs stated in the paper.

\section{Proof of \lref{thrm:convex_equivalence}}

To show the equivalence, we first establish that the mappings between $P$ and
$\pii$, and $A$ are one-to-one using the following lemma.
\begin{lemma}
    The mappings \eref{eq:definition_A}, \eref{eq:recover_pi} and
    \eref{eq:recover_P} between $P$ and $\pi_\infty$, and $A$ are one-to-one.
    \label{lemma:unique_mapping}
\end{lemma}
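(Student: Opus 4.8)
The plan is to read the claim as the statement that the forward map $(\pi_\infty,P)\mapsto A$ given by \eref{eq:definition_A} is a bijection from the feasible set of problem \eref{eq:nl_matching} onto the feasible set of problem \eref{eq:convex_mom}, with inverse the recovery map $A\mapsto(\pi_\infty,P)$ given by \eref{eq:recover_pi}--\eref{eq:recover_P}. Accordingly I would proceed in three steps: (i) check that the forward map sends feasible points to feasible points; (ii) check that the recovery map is well defined (this is where invertibility of a diagonal matrix must be justified) and also sends feasible points to feasible points; (iii) verify that the two compositions are the identity. Bijectivity, and hence the one-to-one property, then follows.

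For step (i), take $(\pi_\infty,P)$ feasible for \eref{eq:nl_matching} and set $A=\diag{\pi_\infty}P$. Then $A\ge 0$ since $\pi_\infty\ge 0$ and $P\ge 0$; $\mathds{1}^T A\mathds{1}=\pi_\infty^T P\mathds{1}=\pi_\infty^T\mathds{1}=1$ using $P\mathds{1}=\mathds{1}$; $A\mathds{1}=\diag{\pi_\infty}P\mathds{1}=\pi_\infty\in\polylb$; and $A^T\mathds{1}=P^T\diag{\pi_\infty}\mathds{1}=P^T\pi_\infty=\pi_\infty=A\mathds{1}$ using the eigenvector constraint $\pi_\infty=P^T\pi_\infty$. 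Hence $A$ is feasible for \eref{eq:convex_mom} (and, incidentally, $B^T A B=B^T\diag{\pi_\infty}PB$, so the two objectives coincide).

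For step (ii), take $A$ feasible for \eref{eq:convex_mom}. The constraint $A\mathds{1}\in\polylb$ forces $A\mathds{1}>0$ by the defining property of the polytope $\polylb$, so $\diag{A\mathds{1}}$ is invertible and the recovery map is well defined; put $\pi_\infty=A\mathds{1}$ and $P=\diag{\pi_\infty}^{-1}A$. Then $\pi_\infty\ge 0$, $\mathds{1}^T\pi_\infty=\mathds{1}^T A\mathds{1}=1$, $\pi_\infty\in\polylb$; $P\ge 0$ because $\diag{\pi_\infty}^{-1}$ is a strictly positive diagonal matrix and $A\ge 0$; $P\mathds{1}=\diag{\pi_\infty}^{-1}A\mathds{1}=\diag{\pi_\infty}^{-1}\pi_\infty=\mathds{1}$; and $P^T\pi_\infty=A^T\diag{\pi_\infty}^{-1}\pi_\infty=A^T\mathds{1}=A\mathds{1}=\pi_\infty$, where the third equality is exactly the constraint $A\mathds{1}=A^T\mathds{1}$. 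Thus $(\pi_\infty,P)$ is feasible for \eref{eq:nl_matching}. For step (iii), starting from feasible $(\pi_\infty,P)$, forming $A=\diag{\pi_\infty}P$ and applying \eref{eq:recover_pi}--\eref{eq:recover_P} returns $A\mathds{1}=\diag{\pi_\infty}P\mathds{1}=\pi_\infty$ and $\diag{\pi_\infty}^{-1}A=\diag{\pi_\infty}^{-1}\diag{\pi_\infty}P=P$; conversely, starting from feasible $A$ with $\pi_\infty=A\mathds{1}$ and $P=\diag{\pi_\infty}^{-1}A$, we recover $\diag{\pi_\infty}P=\diag{\pi_\infty}\diag{\pi_\infty}^{-1}A=A$. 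Hence the maps are mutual inverses.

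The main obstacle is not any computation but keeping the invertibility of $\diag{\pi_\infty}$ (equivalently $\diag{A\mathds{1}}$) legitimate at every occurrence: on the $(\pi_\infty,P)$ side it comes from Assumption~1 (the polytopic bound makes every component of $\pi_\infty$ strictly positive), and on the $A$ side it comes from the constraint $A\mathds{1}\in\polylb$. One must also be careful about which constraint of which problem is invoked at each equality --- in particular $P\mathds{1}=\mathds{1}$ is needed to recover $\pi_\infty$ from $A\mathds{1}$, and $A\mathds{1}=A^T\mathds{1}$ is precisely what turns into the eigenvector condition $\pi_\infty=P^T\pi_\infty$ --- so that the bijection is really between the two stated feasible sets and not merely between the ambient spaces of matrices.
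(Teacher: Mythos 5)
Your proof is correct, but it is organized differently from the paper's. The paper proves Lemma~\ref{lemma:unique_mapping} as a pure injectivity statement in both directions: assuming $\diag{\pi_\infty}P=\diag{\tilde\pi_\infty}\tilde P$ with $P,\tilde P$ row-stochastic, right-multiplication by $\ones$ forces $\pi_\infty=\tilde\pi_\infty$ and then left-multiplication by $\diag{\pi_\infty}^{-1}$ forces $P=\tilde P$; a symmetric cancellation argument ($A\ones=\tilde A\ones$ and $\diag{A\ones}^{-1}A=\diag{\tilde A\ones}^{-1}\tilde A$ imply $A=\tilde A$) gives injectivity of the recovery map. All of the feasibility bookkeeping you carry out --- that the forward map lands in the feasible set of \eref{eq:convex_mom} and the recovery map lands in the feasible set of \eref{eq:nl_matching} --- is deferred by the paper to the separate proof of Lemma~\ref{thrm:convex_equivalence}, where essentially your computations appear verbatim. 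You instead prove the stronger statement that the two maps are mutual inverses between the two feasible sets, from which the one-to-one property is immediate. What your route buys is economy over the whole argument: the bijection between feasible sets, together with your observation that the objectives coincide, is already the full content of Lemma~\ref{thrm:convex_equivalence}, so you have effectively proved both lemmas at once. What the paper's route buys is a separation of concerns: injectivity is established using only $P\ones=\ones$ and $\pi_\infty>0$, independently of the remaining constraints. You handle the one genuinely delicate point --- invertibility of $\diag{A\ones}$, justified by $A\ones\in\polylb\Rightarrow A\ones>0$ --- exactly as the paper does, so there is no gap.
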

\begin{proof}
    Recall that $\pii$ has strictly positive elements.  Firstly, given $P$ and
    $\pii$, the equation \eref{eq:definition_A} yields a single $A$. Secondly,
    assume $\diag{\pi_\infty} P = A = \diag{\tilde \pi_\infty} \tilde P$, where
    $\tilde P$ and $P$ are row-stochastic.  Multiplying the equation by $\ones$
    from the right yields $\pi_\infty = \tilde \pi_\infty$. Then, multiplying
    from the left by $\diag{\pi_\infty}^{-1}$ yields $P = \tilde P$.

    Thirdly, equations \eref{eq:recover_pi} and \eref{eq:recover_P} yield
    unique $P$ and $\pii$ given an $A$. Fourthly, assume $A$ and $\tilde A$
    both map to $P$ and $\pii$, i.e., $A\ones = \pii = \tilde A \ones$, and
    $\diag{A\ones}^{-1} A = P = \diag{\tilde A\ones}^{-1} \tilde A$.
    Multiplying the last equation by $\diag{A\ones}$ yields $A = \tilde A$.
\end{proof}
Then we note that the cost functions are the same in the two formulations
\eref{eq:nl_matching} and \eref{eq:convex_mom} of the problem.  Secondly, we
map feasible solutions between the two problems.
\subsubsection{Solution of \eref{eq:nl_matching} $\Rightarrow$ Solution of
\eref{eq:convex_mom}}

Assume $P$ and $\pii$ are optimal for \eref{eq:nl_matching}, and define $A
= \diag{\pii} P$. Then
\begin{itemize}
    \item $P \geq 0, \pii \geq 0 \Rightarrow A \geq 0$,
    \item $\ones^T A \ones = \ones^T \diag{\pii} P \ones = \pii^T \ones = 1$,
    \item $A\ones = \diag{\pii} P \ones = \diag{\pii} \ones = \pii \in \polylb$,
    \item $\pii = P^T \pii \Rightarrow \diag{\pii} \ones = P^T \diag{\pii} \ones
        \overset{P\ones = \ones}{\Rightarrow} \diag{\pii} P \ones
        = [\diag{\pii}P]^T \ones \Rightarrow A \ones = A^T \ones$.
\end{itemize}

\subsubsection{Solution of \eref{eq:convex_mom} $\Rightarrow$ Solution of
\eref{eq:nl_matching}}

Assume $A$ is optimal for \eref{eq:convex_mom}. Let $\pii = A \ones$ and $P =
\diag{A\ones}^{-1} A$. Note that $\diag{A \ones}^{-1}$ is well-defined since $A
\ones \in \polylb$, i.e., $A \ones > 0$. Then
\begin{itemize}
    \item $A \geq 0 \Rightarrow \pii \geq 0 \text{ and } P \geq 0$,
    \item Since for any vector $x$ with all non-zero elements it holds that $[\diag{x}^{-1} x]_i
        = [\diag{x}^{-1}]_{ii} [x]_i = [x]_i^{-1} [x]_i = 1$, i.e.,
        $\diag{x}^{-1} x = \ones$, we have that $\ones = \diag{A\ones}^{-1}
        A \ones
        = P \ones$,
    \item $\ones^T A \ones = 1 \Rightarrow \ones^T \pii = 1$,
    \item $\pii = A \ones \in \polylb$,
    \item $A \ones = A^T \ones \Rightarrow \pii = A^T \ones$. Then, again
        employing that $\diag{x}^{-1} x = 1$ for any vector with all non-zero
        elements; $\pii = A^T \ones = A^T \diag{A\ones}^{-1} A \ones
        = [\diag{A\ones}^{-1} A]^T A \ones = P^T \pii$.
\end{itemize}


\section{Proof of \lref{thrm:convergence_P_pi}}

Here, we provide a sequence of lemmas that give the details on that 
\begin{enumerate}
    \item[A.] the estimate $\hat M_\infty$ converges to $M_\infty$,
    \item[B.] the solution $\hat A$ of the optimization problem
        \eref{eq:convex_mom} converges to the true parameter $A$,
    \item[C.] the solution $\hat A$ of the optimization problem
        \eref{eq:convex_mom} can be converted
        uniquely to $\hat P$ and $\hat \pi_\infty$.
\end{enumerate}

\subsection{Convergence of $\hat M_\infty$}

\begin{lemma}
    Let the sequence of $N$ observations from the HMM with initial
    distribution $\pi_0$ be the set $\{y_k\}_{k=0}^N$ and form the empirical
    estimate \eref{eq:moment_estimator} of the second order moments.  Then the
    estimate converges, i.e.,
    \begin{equation}
        \hat M_\infty \rightarrow B^T \diag{\pi_\infty} P B
    \end{equation}
    with probability one as the number of observations $N \rightarrow \infty$.
    \label{lemma:convergence_M}
\end{lemma}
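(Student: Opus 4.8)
The plan is to recognise $[\hat M_\infty]_{ij}$ as an ergodic time-average of a bounded functional of a finite-state Markov chain and then to invoke a strong law of large numbers for such chains, along the lines of \cite[Theorem 14.2.53]{cappe_inference_2005}.

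First I would observe that the pair process $z_k := (x_k, y_k)$ is itself a Markov chain on the finite set $\mathcal{X}\times\mathcal{Y}$: given $z_k$, the conditional law of $z_{k+1}$ is $\Pr{x_{k+1}=x',\, y_{k+1}=y' \mid z_k} = [P]_{x_k x'}[B]_{x' y'}$, which depends on $z_k$ only through $x_k$. Under Assumption 1 ($P>0$, $B>0$) this chain is irreducible and aperiodic, hence (uniformly) ergodic, with unique stationary distribution $\mu(x,y) = [\pi_\infty]_x[B]_{xy}$, where $\pi_\infty$ solves $\pi_\infty = P^T\pi_\infty$. Fixing $i,j\in\mathcal{Y}$, I would then write
\begin{equation}
    [\hat M_\infty]_{ij} = \frac{1}{N}\sum_{k=0}^{N-1} g\big(z_k, z_{k+1}\big), \qquad g\big((x,y),(x',y')\big) := \ind{y=i,\ y'=j},
\end{equation}
so that $[\hat M_\infty]_{ij}$ is a time-average of a bounded function of two consecutive states; equivalently, a function of a single state of the ``snake'' chain $w_k := (z_k, z_{k+1})$, which inherits irreducibility and aperiodicity on its support. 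The ergodic theorem / SLLN for ergodic Markov chains then yields, with probability one,
\begin{equation}
    [\hat M_\infty]_{ij} \to \E{g(z_0,z_1)} = \sum_{x,x'} [\pi_\infty]_x [B]_{xi} [P]_{xx'} [B]_{x'j} = \big[B^T \diag{\pi_\infty} P B\big]_{ij},
\end{equation}
where the expectation is under the stationary law $\mu$; this is exactly the $k\to\infty$ limit of \eref{eq:2nd_moments}, since $[M_k]_{ij} = [B^T\diag{(P^T)^k\pi_0}PB]_{ij}$ and $(P^T)^k\pi_0 \to \pi_\infty$. As $\mathcal{Y}$ is finite, I would intersect the finitely many almost-sure events over all pairs $(i,j)$ to conclude entrywise — and hence Frobenius-norm — convergence $\hat M_\infty \to B^T\diag{\pi_\infty}PB$ almost surely.

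The one point that needs care is that $\pi_0$ need not equal $\pi_\infty$, so $\{z_k\}$ is not started in stationarity; the hard part will therefore be justifying that the almost-sure limit above is unaffected by the initial law. This holds because the ergodic theorem for positive-recurrent (here, finite irreducible) Markov chains applies from an arbitrary initial distribution, or, more explicitly, because one can couple $\{z_k\}$ with a stationary copy and note that the coupling time is almost surely finite, so the two time-averages share the same limit. Beyond verifying the hypotheses of the cited ergodic theorem under Assumption 1, the remaining steps are routine.
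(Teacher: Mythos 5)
Your proposal is correct and follows essentially the same route as the paper's proof: both augment the HMM into a finite ergodic Markov chain (the paper uses $(y_k, y_{k+1}, x_{k+1})$, you use $(x_k,y_k)$ together with its snake/two-step extension, which is equivalent), invoke the strong law of large numbers for Markov chains from \cite[Theorem 14.2.53]{cappe_inference_2005}, and identify the stationary expectation with $[B^T \diag{\pi_\infty} P B]_{ij}$. Your explicit attention to the non-stationary start $\pi_0$ and the entrywise-to-matrix conclusion matches the paper's treatment, so no gap remains.
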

Before we prove the above lemma, we first introduce two auxiliary lemmas.
\begin{lemma}
    Let $x_k$ be the state of an HMM and $y_k$ the corresponding observation.
    Then the process defined by the tuplet $(x_k, y_k, y_{k-1})$ is a Markov
    chain. 
    \label{lemma:hmm_as_mc}
\end{lemma}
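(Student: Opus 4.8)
The plan is to show that $(x_k, y_k, y_{k-1})$ satisfies the Markov property by direct computation of its transition probabilities, using the conditional independence structure of an HMM. Recall that in an HMM the state sequence $\{x_k\}$ is itself a Markov chain, and the observation $y_k$ depends only on the current state $x_k$ (and is conditionally independent of everything else given $x_k$).

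First I would write $z_k := (x_k, y_k, y_{k-1})$ and compute $\Pr{z_{k+1} = (x', y', y'') \mid z_k, z_{k-1}, \dots, z_1}$. The event $\{z_{k+1} = (x', y', y'')\}$ requires $y'' = y_k$, so this coordinate is determined by $z_k$ alone; I would treat the case $y'' \neq y_k$ (probability zero) separately and otherwise reduce to computing $\Pr{x_{k+1} = x', y_{k+1} = y' \mid x_k, y_k, y_{k-1}, x_{k-1}, \dots}$. Then I would factor this as
\begin{equation}
    \Pr{y_{k+1} = y' \mid x_{k+1} = x'} \cdot \Pr{x_{k+1} = x' \mid x_k, y_k, y_{k-1}, x_{k-1}, \dots},
\end{equation}
using that $y_{k+1}$ depends only on $x_{k+1}$. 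The first factor is $[B]_{x' y'}$. For the second factor, I would invoke that $\{x_k\}$ is Markov and that, conditioned on the state path, the observations carry no extra information about the next state, so it collapses to $\Pr{x_{k+1} = x' \mid x_k} = [P]_{x_k x'}$, which depends on the past only through $x_k$, hence only through $z_k$. Comparing with the analogous one-step expression $\Pr{z_{k+1} = (x',y',y'') \mid z_k}$ gives equality, establishing the Markov property.

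The main obstacle is being careful about the conditional independence bookkeeping: one must justify that conditioning on the full history $(x_k, y_k, y_{k-1}, x_{k-1}, y_{k-2}, \dots)$ — which includes past \emph{states}, not just past observations — does not change the relevant conditional laws beyond what $x_k$ already determines. This follows from the defining factorization of the HMM joint distribution, but it should be spelled out rather than asserted. A secondary, purely bookkeeping point is handling the degenerate "$y'' \neq y_k$" coordinate so that the transition kernel is well-defined on the enlarged state space $\mathcal{X} \times \mathcal{Y} \times \mathcal{Y}$ even though not all triples are reachable; this is routine and does not affect the Markov property. I would also remark that the same argument shows $(x_k, y_k, y_{k-1})$ inherits a stationary distribution from the stationarity of $\{x_k\}$ once the chain has mixed, which is what is actually needed downstream for $\hat M_\infty$.
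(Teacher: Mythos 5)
Your proof is correct and follows essentially the same route as the paper, which simply asserts that ``the Markov property can be checked'' by exactly the direct verification you carry out: factor the one-step conditional law of $(x_{k+1},y_{k+1},y_k)$ given the full history into $[B]_{x'y'}\,[P]_{x_k x'}\,\ind{y''=y_k}$ using the HMM's conditional independence structure, and observe that this depends on the past only through $(x_k,y_k,y_{k-1})$. Your explicit handling of the degenerate third coordinate and of conditioning on past states (not just observations) is a welcome level of care that the paper's one-line proof leaves implicit.
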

\begin{proof}
    It can be checked that the Markov property is satisfied.
    \comment{
    Consider the state probability conditioned on the history:
    \begin{align}
        \Pr{&x_k = p, y_{k} = q, y_{k-1} = r | \notag \\
        & x_{k-1} = \xi_{k-1}, y_{k-1}
        = \eta_{k-1}, y_{k-2} = \eta_{k-2}, \dots, \notag \\
    & x_1 = \xi_1, y_1 = \eta_1, y_0 = \eta_0}  \notag \\[0.3cm]
    = \Pr{&y_{k} = q, y_{k-1} = r | x_k = p, \notag \\
        & x_{k-1} = \xi_{k-1}, y_{k-1}
        = \eta_{k-1}, y_{k-2} = \eta_{k-2}, \dots, \notag \\
        & x_1 = \xi_1, y_1 = \eta_1, y_0 = \eta_0} \quad \times  \notag \\
        \Pr{&x_k = p | \notag \\
        & x_{k-1} = \xi_{k-1}, y_{k-1}
        = \eta_{k-1}, y_{k-2} = \eta_{k-2}, \dots, \notag \\
        & x_1 = \xi_1, y_1 = \eta_1, y_0 = \eta_0}  \notag \\[0.3cm]
        = \Pr{&y_k = q | x_k = p}  \times \notag \\
        \Pr{&y_{k-1} = r | y_{k-1} = \eta_{k-1}} \Pr{x_k=p|x_{k-1}=\xi_{k-1}}  \notag \\[0.3cm]
        = \Pr{&y_k = q | x_k = p, x_{k-1} = \xi_{k-1}}  \times \notag \\
        \Pr{&y_{k-1} = r | y_{k-1} = \eta_{k-1}} \Pr{x_k=p|x_{k-1}=\xi_{k-1}}  \notag \\[0.3cm]
        = \Pr{&y_k = q, x_k = p | x_{k-1} = \xi_{k-1}}  \times \notag \\
        \Pr{&y_{k-1} = r | y_{k-1} = \eta_{k-1}} \notag \\[0.3cm]
        = \Pr{&y_k = q, x_k = p | x_{k-1} = \xi_{k-1}, y_{k-1}=\eta_{k-1}}  \times \notag \\
    \Pr{&y_{k-1} = r | y_{k-1} = \eta_{k-1}, x_{k-1}=\xi_{k-1}} \notag \\[0.3cm]
        =\Pr{&x_k = p, y_k = q, y_{k-1} = r | x_{k-1}=\xi_{k-1},
    y_{k-1}=\eta_{k-1} }  \notag \\[0.3cm]
        = \Pr{&x_k = p, y_k = q, y_{k-1} = r | \notag \\
            &x_{k-1}=\xi_{k-1},
y_{k-1}=\eta_{k-1}, y_{k-2}=\eta_{k-2} }. 
    \end{align}
    This shows that the Markov property is satisfied.
    }
\end{proof}
The above lemma allows us to recast the HMM into a Markov chain so that we can
leverage convergence results related to Markov chains. The following lemma
guarantees necessary properties of this new Markov chain:
\begin{lemma}
    If the transition and observation matrices of the HMM referred to in
    Lemma~\ref{lemma:hmm_as_mc} have all elements strictly positive, then the
    Markov chain $(x_k, y_k, y_{k-1})$ is irreducible and aperiodic.
    \label{lemma:ergodic_hmm_as_mc}
\end{lemma}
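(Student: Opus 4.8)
The plan is to show that the two-step transition kernel of the Markov chain $(x_k,y_k,y_{k-1})$ has all entries strictly positive; from this, both irreducibility and aperiodicity follow at once, since a Markov chain whose transition matrix is primitive (some power is strictly positive) is irreducible and aperiodic.

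First I would make the transition kernel explicit. Because $B$ has strictly positive entries, every observation value is attainable from every state, so the state space of the chain is $\mathcal{X}\times\mathcal{Y}\times\mathcal{Y}$; denote its transition matrix by $Q$. Using the conditional-independence structure established in the proof of Lemma~\ref{lemma:hmm_as_mc}, the probability of moving from $(i,j,l)$ to $(i',j',l')$ in one step is
\[
    Q_{(i,j,l),(i',j',l')}=\ind{l'=j}\,[P]_{ii'}\,[B]_{i'j'}.
\]
The key observation is that, since $P>0$ and $B>0$, this is strictly positive exactly when $l'=j$: a one-step move is legal precisely when the middle coordinate of the source is copied into the last coordinate of the destination, and it neither constrains $i',j'$ nor depends on the oldest coordinate $l$.

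Next I would use this to prove $Q^2>0$ entrywise. Given arbitrary states $(i,j,l)$ and $(i'',j'',l'')$, consider the intermediate state $(a,l'',j)$ for any $a\in\mathcal{X}$: the move $(i,j,l)\to(a,l'',j)$ is legal because its last coordinate equals the middle coordinate $j$ of the source, and the move $(a,l'',j)\to(i'',j'',l'')$ is legal because its last coordinate $l''$ equals the middle coordinate of the intermediate state. Hence
\[
    [Q^2]_{(i,j,l),(i'',j'',l'')}\;\geq\;Q_{(i,j,l),(a,l'',j)}\,Q_{(a,l'',j),(i'',j'',l'')}\;>\;0.
\]

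Finally I would conclude: $Q^2>0$ shows that all states communicate, giving irreducibility, and since $Q$ is row-stochastic (each row has a positive entry) we also get $Q^3=Q\,Q^2>0$, so every state returns to itself at times $2$ and $3$, forcing its period to divide $\gcd(2,3)=1$. The argument is essentially bookkeeping with the triple index; the only places needing a moment's care are recognizing that the last coordinate is ``free'' so that two steps always suffice for full reachability, and checking that the period is genuinely $1$ rather than $2$ — which is exactly why the step $Q^3>0$ is included rather than stopping at $Q^2>0$.
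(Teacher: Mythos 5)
Your proof is correct and follows essentially the same route as the paper's: write down the lumped chain's kernel, observe that a one-step move is positive exactly when the shared observation matches, and conclude that every state reaches every state in two steps. The only (immaterial) difference is the finishing move for aperiodicity — the paper notes that states of the form $(x,y,y)$ admit a positive one-step self-loop, whereas you use positive return probabilities at times $2$ and $3$ and $\gcd(2,3)=1$; both are valid, and your write-up actually supplies the computation the paper leaves as ``it can be shown.''
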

\begin{proof}
    The transition matrix of the lumped Markov chain consists of
    multiplications between elements of the $P$ and $B$ matrices (which are
    strictly positive) and zeros (whenever the common observation is not
    shared).
    
    It can be shown that any state can reach any state with positive
    probability in at most two steps ($\Rightarrow$ irreducibility).
    Furthermore, it can be shown that any state with the same two observations
    has positive probability of returning to itself in one step ($\Rightarrow$
    aperiodicity).
\end{proof}

With these two lemmas, we are now ready to prove \lref{lemma:convergence_M}.

\comment{
\begin{proof}[(Original) Proof of \lref{lemma:convergence_M}]
    \comment{
    We begin by calculating the expected value of the estimate.
    \begin{align}
        \mathbb{E}\{\hat M_\infty\} &= \mathbb{E}\Big\{ \frac{1}{N} \sum_{k=0}^{N-1}
    e_{y_k} e_{y_{k+1}}^T \Big\} \notag \\
    &= \frac{1}{N} \sum_{k=0}^{N-1} \mathbb{E}\{e_{y_k} e_{y_{k+1}}^T \}
    \notag \\
    &= \frac{1}{N} \sum_{k=0}^{N-1} B^T \diag{(P^T)^k \pi_0} PB \notag \\
    &= B^T \Bigg( \frac{1}{N} \sum_{k=0}^{N-1} \diag{(P^T)^k \pi_0} \Bigg) PB
    \notag \\
    &\rightarrow B^T \diag{\pi_\infty} PB,
    \end{align}
    as $N \rightarrow \infty$.  Here we employed that the expression in the big
    parenthesis is a C\'esaro sum. It is well-known that if the summand is
    convergent, then so is the C\'esaro mean \cite{wikipedia_cesaro_2016}. Since the vector $(P^T)^k \pi_0$
    will converge to $\pi_\infty$ as $k \rightarrow \infty$, under the
    aperiodicity and irreducibility assumptions on the transition matrix $P$,
    the whole expression inside the parenthesis will converge to
    $\diag{\pi_\infty}$, yielding the wanted expression. We also employed that
    \begin{align}
        \E{e_{y_k} e_{y_{k+1}}^T} &= \sum_{i=1}^Y \sum_{j=1}^Y \Pr{y_k = i,
    y_{k+1} = j} e_i e_j^T \notag \\
    &= \sum_{i=1}^Y \sum_{j=1}^Y [M_k]_{ij} e_i e_j^T \notag \\
    &= M_k \notag \\
    &= B^T \diag{(P^T)^k \pi_0} PB,
   \end{align}
   where \eref{eq:2nd_moments} was used. 
   }
    
   Consider the limit of the $(i,j)$th element of the estimate:
   \begin{align}
        \lim_{N\to \infty} [\hat M_\infty]_{ij} &= \lim_{N\to \infty} \frac{1}{N} \sum_{k=0}^{N-1}
        \ind{y_k = i, y_{k+1} = j} \notag \\
        &= \lim_{N\to \infty} \sum_{l=1}^X \frac{1}{N} \sum_{k=0}^{N-1} \ind{y_k = i, y_{k+1} = j,
        x_{k+1} = l} \notag \\
        &= \sum_{l=1}^X \lim_{N\to \infty} \frac{1}{N} \sum_{k=0}^{N-1} \ind{y_k = i, y_{k+1} = j,
        x_{k+1} = l} \notag \\
        &\rightarrow \sum_{l=1}^X [\tilde \pi_\infty]_{(i,j,l)} \notag \\
        &= \sum_{l=1}^X \lim_{k \to \infty} \Pr{y_k = i, y_{k+1} = j, x_{k+1} = l} \notag \\
        &= \lim_{k \to \infty} \sum_{l=1}^X \Pr{y_k = i, y_{k+1} = j, x_{k+1} = l} \notag \\
        &= \lim_{k \to \infty} \Pr{y_k = i, y_{k+1} = j} \notag \\
        &=\textcolor{gray}{\lim_{k \to \infty} [M_k]_{ij}} \notag \\
        &=\textcolor{gray}{\lim_{k \to \infty} [B^T \diag{(P^T)^k \pi_0} PB]_{ij}} \notag \\
        &= [B^T \diag{\pi_\infty} PB]_{ij} \notag \\
        &= [M_\infty]_{ij},
   \end{align}
    with probability one, where $\tilde \pi_\infty$ denotes the stationary
    distribution of the Markov chain $(y_k, y_{k+1}, x_{k+1})$. Here we
    employed a law of large numbers for Markov chains to say that the empirical
    average of the indicator function converges to the stationary distribution \cite[Theorem
    14.2.53]{cappe_inference_2005}, and that the state distribution converges to the
    stationary distribution, regardless of the initial distribution \cite[Proposition
    14.1.11]{cappe_inference_2005}. For this, we need the chain $(y_k, y_{k+1},
    x_{k+1})$ to be irreducible and aperiodic, which it is according to
    \lref{lemma:ergodic_hmm_as_mc}.
\end{proof}
}

\begin{proof}[Proof of \lref{lemma:convergence_M}]
    Denote the state of the lumped Markov chain as $z_k = (y_k, y_{k+1},
    x_{k+1})$ and let the functions $f_{ij}$ be given by
    \begin{equation}
        f_{ij}(z_k) = \ind{y_k = i, y_{k+1} = j}.
    \end{equation}
    Then,
    \begin{align*}
        \lim_{N\rightarrow\infty} [\hMii]_{ij} &= \lim_{N\to \infty} \frac{1}{N} \sum_{k=0}^{N-1}
        \ind{y_k = i, y_{k+1} = j} \\
        &= \lim_{N\to \infty} \frac{1}{N} \sum_{k=0}^{N-1} f_{ij}(z_k) \\
        &\rightarrow \sum_{z \in \mathcal{Z}} \tilde \pi_\infty(z) f_{ij}(z) \\
        &= \sum_{x\in\mathcal{X}}\tilde \pi_\infty((i,j,x)) \\
        &= \sum_{x\in\mathcal{X}} \lim_{k\to\infty} \Pr{(y_k, y_{k+1}, x_{k+1})
        = (i,j,x)} \\
        &= \lim_{k\to\infty} \Pr{(y_k, y_{k+1}) = (i,j)} \\
        &= [B^T \diag{\pi_\infty} PB]_{ij} \\
        &= [M_\infty]_{ij},
    \end{align*}
    with probability one, where $\mathcal{Z} = \mathcal{Y} \times \mathcal{Y}
    \times \mathcal{X}$ and $\tilde \pi_\infty$ is the stationary distribution of
    $z_k$. Here, we used that $z_k$ is aperiodic and irreducible
    (\lref{lemma:ergodic_hmm_as_mc}) and the strong law of large numbers for
    Markov chains \cite[Theorem 14.2.53]{cappe_inference_2005}.
\end{proof}

\subsection{Convergence of optimization solution}

As guaranteed by \lref{lemma:convergence_M}, the estimate $\hat M_\infty$ will
converge to $M_\infty$ with probability one. The next step is to
show that the solution(s) of the optimization problem \eref{eq:convex_mom}
converges to the optimal value, i.e., that $\hat A \rightarrow A$, as $\hat
M_\infty \rightarrow M_\infty$. The following lemma guarantees that there is
a unique solution to the optimization problem and allows us to write \emph{the}
minimizer from here on.
\begin{lemma}
    Under the assumption that $B$ has full rank, the minimizer of the
    optimization problem \eref{eq:convex_mom} is unique. 
    \label{lemma:unique_minimizer}
\end{lemma}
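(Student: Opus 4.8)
The plan is to use that problem \eref{eq:convex_mom} is a convex program and to reduce uniqueness of its minimizer to strict convexity of the objective, together with the elementary observations that its feasible set is convex, nonempty and compact.

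First I would record the structure of the feasible set $\mathcal{F}$ of \eref{eq:convex_mom}. It is convex, being the intersection of the cone $\{A \ge 0\}$, the affine set $\{A : \ones^T A\ones = 1,\ A\ones = A^T\ones\}$ and the polyhedron $\{A : A\ones \in \polylb\}$. It is bounded, since $A \ge 0$ together with $\ones^T A\ones = 1$ forces $0 \le [A]_{ij} \le 1$, and it is closed as an intersection of closed sets, hence compact. It is nonempty, since the true parameter $A^\star = \diag{\pii^\star} P^\star$ satisfies all four constraints --- this is precisely the ``solution of \eref{eq:nl_matching} $\Rightarrow$ solution of \eref{eq:convex_mom}'' verification carried out in the proof of \lref{thrm:convex_equivalence}. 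Consequently the continuous objective $g(A) = \|\hat M_\infty - B^T A B\|_F^2$ attains a minimum on $\mathcal{F}$, and to obtain uniqueness it suffices to prove that $g$ is strictly convex on $\Rb^{X \times X}$, because a strictly convex function has at most one minimizer over any convex set.

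The core of the argument is therefore the strict convexity of $g$, which is exactly where the full-rank hypothesis on $B$ is needed. Since $\mathcal{L} : A \mapsto B^T A B$ is linear, $g$ is a quadratic function of $A$ whose purely quadratic part is $A \mapsto \|B^T A B\|_F^2 \ge 0$; thus $g$ is strictly convex if and only if this quadratic form is positive definite, i.e.\ if and only if $\mathcal{L}$ is injective. That $B$ having full rank (i.e.\ $\rank{B} = X$) makes $\mathcal{L}$ injective follows from a short linear-algebra computation: $B^T$ is then injective as a map $\Rb^{X} \to \Rb^{Y}$, so $B^T A B = 0 \Rightarrow A B = 0$, and applying injectivity of $B^T$ to each row of $A$ gives $A = 0$. (Equivalently, vectorizing gives $(B^T\otimes B^T)\vec{A} = 0$ with $\rank{B^T \otimes B^T} = (\rank{B})^2 = X^2$, so $B^T\otimes B^T$ has trivial kernel.) Hence $g$ is strictly convex, and combined with the existence argument above this shows the minimizer of \eref{eq:convex_mom} is unique.

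I expect the only nontrivial point to be this injectivity / strict-convexity step, and in particular the observation that it genuinely needs $B$ to have full rank $X$ (so implicitly $X \le Y$, consistent with Assumption~1 and with the fact that otherwise the moment relation \eref{eq:2nd_moments} would not determine $P$); the remaining ingredients --- convexity, compactness and nonemptiness of $\mathcal{F}$, and ``strictly convex $\Rightarrow$ unique minimizer'' --- are routine.
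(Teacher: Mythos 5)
Your proof is correct and follows essentially the same route as the paper: uniqueness is reduced to strict convexity of the objective, which is established from the full column rank of $B^T \otimes B^T$ (equivalently, injectivity of $A \mapsto B^T A B$). The only addition is your explicit verification that the feasible set is nonempty and compact, which the paper leaves implicit.
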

\begin{proof}
    We check that the Hessian of the cost function in \eref{eq:convex_mom} is
    positive definite to ensure strict convexity (see, e.g.,
    \cite{boyd_convex_2004}). Note that the cost is of the form (see equation
    \eref{eq:vectorization_of_cost} below for details)
    \begin{align}
        f(x) &= \| Qx + q \|_2^2 \notag \\
             &= x^T Q^T Q x + 2 q^T Q x + q^T q,
            \label{eq:quad_cost}
    \end{align}
    where $Q = B^T \otimes B^T$, which has the Hessian
    \begin{equation}
        \nabla^2 f(x) = 2 Q^T Q.
    \end{equation}
    Positive definiteness of the Hessian is, in this case, equivalent to
    \begin{align}
        x^T \big[\nabla^2 f(x)\big] x > 0 &\quad\forall x \neq
        0 & \Leftrightarrow \notag \\
        x^T \big[2Q^T Q] x > 0 &\quad\forall x \neq 0 & \Leftrightarrow \notag \\
        2 (Qx)^T (Q x) > 0 &\quad\forall x \neq 0 & \Leftrightarrow \notag \\
        2 \| Qx \|^2 > 0 &\quad\forall x \neq 0 & \Leftrightarrow \notag \\
        Qx \neq 0 &\quad\forall x \neq 0 & \Leftrightarrow \notag \\
        \ker Q  = \{0&\}. &
        \label{eq:strict_convex}
    \end{align}
    Since (see, e.g., \cite{horn_topics_1991}) 
    \begin{equation}
        \rank{B^T \otimes B^T} = \rank{B^T}\rank{B^T} = X^2,
    \end{equation}
    we see that the $Y^2 \times X^2$ matrix $B^T \otimes B^T$ has full column
    rank. By \eref{eq:strict_convex}, this implies uniqueness since, then, the
    cost function is strictly convex.
\end{proof}

The next lemma says that the sequence of minimizers of the approximate
optimization problems will converge to the minimizer of the true optimization
problem.
\begin{lemma}
    Let $\hat A$ be the minimizer of the optimization problem
    \eref{eq:convex_mom} using $\hat M_\infty$ from equation
    \eref{eq:moment_estimator}, and let $A$ be the minimizer of the optimization problem
    \eref{eq:convex_mom} using instead the true $M_\infty$.
    Then $\hat A$ converges with probability one to $A$ as $\hat M_\infty$ tends to
    $M_\infty$ as in \lref{lemma:convergence_M}.
    \label{lemma:epigraphical_convergence}
\end{lemma}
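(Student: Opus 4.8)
The plan is to view \eref{eq:convex_mom} as a parametric convex program in which the only data-dependent object is the quadratic objective, and then to invoke the classical stability (consistency of $\argmin$) argument for such programs. First I would record that the feasible set
\[
    \mathcal{F} = \{ A : A \geq 0,\ \ones^T A \ones = 1,\ A\ones \in \polylb,\ A\ones = A^T \ones \}
\]
is a fixed, nonempty, compact, convex polytope that does \emph{not} depend on $\hat M_\infty$: nonemptiness holds because the true matrix $A = \diag{\pii}P$ is feasible (as verified in the proof of \lref{thrm:convex_equivalence}), and compactness follows from $A \geq 0$ together with $\ones^T A \ones = 1$, which confines $A$ to a face of the probability simplex.

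Second I would show that the objectives converge uniformly on $\mathcal{F}$. Writing $f_N(A) = \|\hat M_\infty - B^T A B\|_F^2$ and $f(A) = \|M_\infty - B^T A B\|_F^2$, expanding the squares and using Cauchy--Schwarz gives
\[
    |f_N(A) - f(A)| \leq \Big( \|\hat M_\infty\|_F + \|M_\infty\|_F + 2\sup_{A' \in \mathcal{F}} \|B^T A' B\|_F \Big) \, \|\hat M_\infty - M_\infty\|_F,
\]
and since $\sup_{A' \in \mathcal{F}} \|B^T A' B\|_F < \infty$ by compactness of $\mathcal{F}$, the right-hand side tends to $0$ \emph{uniformly in} $A$ on the probability-one event on which $\hat M_\infty \to M_\infty$ (furnished by \lref{lemma:convergence_M}). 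Equivalently, one may note that the convex functions $f_N$ converge pointwise, so \cite[Theorem 10.8]{rockafellar_convex_1970} already upgrades this to convergence that is uniform on compact sets.

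Third, with uniform convergence in hand and with the minimizer $A$ of $f$ over $\mathcal{F}$ unique by \lref{lemma:unique_minimizer} (the same strict-convexity argument shows the minimizer $\hat A$ of $f_N$ is unique for each finite $N$), I would run the standard $\argmin$ argument along a fixed realization in the good event: by compactness of $\mathcal{F}$, any subsequence of $\{\hat A\}$ has a further subsequence with a limit $\bar A \in \mathcal{F}$, along which
\[
    f(\bar A) = \lim f_N(\hat A) \leq \lim f_N(A) = f(A),
\]
where the equality uses uniform convergence $f_N \to f$ and continuity of $f$, and the inequality uses optimality of $\hat A$ for $f_N$. Hence $\bar A$ minimizes $f$ over $\mathcal{F}$, so $\bar A = A$ by uniqueness; as every subsequential limit equals $A$ and $\mathcal{F}$ is compact, $\hat A \to A$, and since this holds on a probability-one event the convergence is with probability one. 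This is precisely the content of the cited fundamental theorem of statistical learning \cite[Lemma 1.1]{campi_system_2006} combined with the convexity result \cite[Theorem 10.8]{rockafellar_convex_1970}.

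I do not anticipate a genuine obstacle here; the only points needing care are (i) the uniformity of the objective convergence, which is exactly what compactness of $\mathcal{F}$ buys (so that the data enter only through $\|\hat M_\infty - M_\infty\|_F$), and (ii) the passage from the deterministic stability statement to the ``with probability one'' conclusion, which is handled simply by conditioning on the almost-sure event supplied by \lref{lemma:convergence_M}.
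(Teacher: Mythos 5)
Your proposal is correct and follows essentially the same route as the paper's proof: establish pointwise (hence, by convexity and \cite[Theorem 10.8]{rockafellar_convex_1970}, uniform-on-compacts) convergence of the cost, then combine compactness of the fixed feasible set with uniqueness of the minimizer (\lref{lemma:unique_minimizer}) to conclude convergence of the argmins, exactly as in \cite[Lemma 1.1]{campi_system_2006}. The only difference is cosmetic: you unpack the two cited lemmas into an explicit Cauchy--Schwarz bound and a subsequence argument, whereas the paper invokes them as black boxes.
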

To be able to prove Lemma~\ref{lemma:epigraphical_convergence}, we will make
use of another two additional lemmas. The first one provides results regarding
how well a minimizer of an approximate cost function is with respect to the
minimizer of the true cost function. In summary, it says that if we have
uniform convergence of the cost function, the minimizer of the approximate cost
function will converge to the minimizer of the true cost function, if the
parameter set is compact.
\begin{lemma}
    Consider a family of random functions $f_k(\theta) : \Theta \to \mathbb{R}$, where
    $\Theta$ is a compact subset of some Euclidean space. Let $\theta_k
    = \argmin_{\theta \in \Theta} f_k(\theta)$. If $f_k(\theta)$ tends
    uniformly to a continuous (on $\Theta$) deterministic limit $\bar
    f(\theta)$ with probability one, i.e.,
    \begin{equation}
        \sup_{\theta \in \Theta} | f_k(\theta) - \bar f(\theta) | \to 0
        \label{eq:limit_assumption}
    \end{equation}
    with probability one as $k \to \infty$, then with $\Theta^* = \{ \theta \in \Theta \text{ s.t. } \theta
            \text{ minimizes } \bar f(\theta) \}$, we have that
            \begin{equation}
                \inf_{\theta \in \Theta^*} \| \theta_k - \theta \| \to 0
            \end{equation}
            with probability one as $k \to \infty$.
    \label{lemma:convergence_minimizers}
\end{lemma}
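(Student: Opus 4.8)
The plan is to prove this as a standard argmin-consistency result: reduce to a deterministic statement on the probability-one event where the uniform convergence \eref{eq:limit_assumption} holds, and then argue by contradiction using compactness of $\Theta$ and continuity of $\bar f$.

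First I would fix an outcome in the probability-one event on which $\varepsilon_k := \sup_{\theta \in \Theta} | f_k(\theta) - \bar f(\theta) | \to 0$, and reason deterministically from here on. Two facts, independent of $k$, are worth recording up front: since $\bar f$ is continuous on the compact set $\Theta$, it attains its infimum, so $\Theta^*$ is nonempty; and, writing $\bar f^* := \min_{\theta \in \Theta} \bar f(\theta)$, the set $\Theta^* = \{\theta \in \Theta : \bar f(\theta) = \bar f^*\}$ is closed, hence compact.

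Next I would establish a sandwich bound on $\bar f(\theta_k)$. For any fixed $\theta^* \in \Theta^*$, optimality of $\theta_k$ for $f_k$ together with uniform convergence gives
\begin{equation}
\bar f(\theta_k) - \varepsilon_k \;\le\; f_k(\theta_k) \;\le\; f_k(\theta^*) \;\le\; \bar f(\theta^*) + \varepsilon_k \;=\; \bar f^* + \varepsilon_k,
\end{equation}
so that $\bar f(\theta_k) \le \bar f^* + 2\varepsilon_k$, and therefore $\bar f(\theta_k) \to \bar f^*$. Here it is essential that the convergence of $f_k$ to $\bar f$ be uniform and not merely pointwise, since $\theta_k$ itself moves with $k$.

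Finally, suppose toward a contradiction that $d_k := \inf_{\theta \in \Theta^*} \| \theta_k - \theta \|$ does not tend to $0$. Then there exist $\delta > 0$ and a subsequence with $d_{k_j} \ge \delta$; by compactness of $\Theta$ we may pass to a further subsequence $\theta_{k_{j_l}} \to \theta_\infty \in \Theta$. Because $\Theta^*$ is closed, the distance from $\theta_\infty$ to $\Theta^*$ is at least $\delta > 0$, so $\theta_\infty \notin \Theta^*$, i.e.\ $\bar f(\theta_\infty) > \bar f^*$. On the other hand, continuity of $\bar f$ and the previous step give $\bar f(\theta_\infty) = \lim_l \bar f(\theta_{k_{j_l}}) = \bar f^*$, a contradiction. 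Hence $d_k \to 0$ on this outcome, and since the outcome was an arbitrary point of the probability-one event, $\inf_{\theta \in \Theta^*} \| \theta_k - \theta \| \to 0$ with probability one. The only real obstacle is bookkeeping around the fact that $\Theta^*$ need not be a singleton — one cannot hope for $\theta_k$ to converge, only for its distance to $\Theta^*$ to vanish — which is exactly why the compactness-plus-subsequence contradiction, rather than a direct estimate, is the right device; the remaining ingredients are routine.
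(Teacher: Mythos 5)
Your argument is correct, but it takes a different route from the paper: the paper's entire proof of this lemma is a citation --- it invokes Lemma 1.1 of \cite{campi_system_2006} (with pointers to \cite{pollard_asymptotics_1993} and \cite[Ch.~24]{gourieroux_statistics_1995}) and merely adds the observation that one may restrict attention to the probability-one set of realizations on which \eref{eq:limit_assumption} holds. You instead supply a self-contained elementary proof: after the same reduction to a deterministic statement on that event, you derive the sandwich bound $\bar f^* \le \bar f(\theta_k) \le \bar f^* + 2\varepsilon_k$ from optimality of $\theta_k$ plus uniform convergence, and then conclude via compactness of $\Theta$, a subsequence extraction, and continuity of $\bar f$ that the distance to the (possibly non-singleton) set $\Theta^*$ must vanish. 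This is exactly the standard argmin-consistency mechanism that underlies the cited lemma, so the two approaches prove the same thing; yours buys transparency and independence from the external reference, while the paper's buys brevity. One cosmetic remark: the step ``the distance from $\theta_\infty$ to $\Theta^*$ is at least $\delta$'' follows from the $1$-Lipschitz continuity of the distance function $\theta \mapsto \inf_{\theta' \in \Theta^*}\|\theta - \theta'\|$ rather than from closedness of $\Theta^*$ per se (closedness is not actually needed to conclude $\theta_\infty \notin \Theta^*$, since any member of $\Theta^*$ has distance zero to it); this does not affect the validity of the argument.
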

\begin{proof}
    This follows from Lemma 1.1 of \cite{campi_system_2006} by restricting to
    the realizations where \eref{eq:limit_assumption} hold. Similar results also appear in
    \cite{pollard_asymptotics_1993} and \cite[Ch. 24]{gourieroux_statistics_1995}.
\end{proof}
The second auxiliary lemma needed states roughly that if we have pointwise
convergence with probability one of a sequence of convex functions, then we
also have uniform convergence over compact sets with probability one.
\begin{lemma}
    Suppose $f_k(\theta)$ is a sequence of convex random functions defined on
    an open convex set $\mathcal{S}$ of some Euclidean space, which converges
    pointwise in $\theta$ with probability one to some $\bar f(\theta)$.
    Then
    \begin{equation}
        \sup_{\theta \in \Theta} | f_k(\theta) - \bar f(\theta) |
        \label{eq:sup_fk_f}
    \end{equation}
    tends to zero with probability one as $k \to \infty$, for each compact subset
    $\Theta$ of $\mathcal{S}$.
    \label{lemma:pointwise_to_uniform}
\end{lemma}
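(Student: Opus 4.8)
The plan is to prove the statement pathwise: it is essentially a deterministic fact about sequences of convex functions --- Theorem~10.8 of \cite{rockafellar_convex_1970} --- and the probability enters only through a countable-intersection step. Fix a countable dense subset $\mathcal{D}\subset\mathcal{S}$. By hypothesis, for each $\theta\in\mathcal{D}$ the event $\{f_k(\theta)\to\bar f(\theta)\}$ has probability one, and hence so does their intersection $\Omega_0$. It then suffices to show that for every realization in $\Omega_0$ and every compact $\Theta\subset\mathcal{S}$ one has $\sup_{\theta\in\Theta}|f_k(\theta)-\bar f(\theta)|\to 0$; there is no measurability issue, since convex real-valued functions are continuous and the supremum may be computed over $\Theta\cap\mathcal{D}$.

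Fix such a realization and a compact $\Theta\subset\mathcal{S}$. Since $\mathcal{S}$ is open, $\Theta$ can be enclosed in a compact convex polytope, $\Theta\subset\operatorname{int}\mathcal{P}\subset\mathcal{P}\subset\mathcal{S}$, with finitely many vertices $v_1,\dots,v_m$ that we may take to lie in $\mathcal{D}$. For any $\theta\in\mathcal{P}$, write $\theta=\sum_i\lambda_i v_i$ as a convex combination of the vertices; by convexity (Jensen's inequality), $f_k(\theta)\le\sum_i\lambda_i f_k(v_i)\le\max_i f_k(v_i)$, and since $f_k(v_i)\to\bar f(v_i)$ for each of the finitely many vertices, the right-hand side is bounded in $k$. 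Hence there is a constant $M$ with $f_k\le M$ on $\mathcal{P}$ for all sufficiently large $k$.

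The heart of the argument is to upgrade this one-sided bound to a $k$-uniform Lipschitz bound on $\Theta$. By compactness, pick finitely many anchor points $x_1,\dots,x_p\in\operatorname{int}\mathcal{P}\cap\mathcal{D}$ and a radius $r>0$ such that the balls $B(x_j,r/2)$ cover $\Theta$ while $B(x_j,2r)\subset\mathcal{P}$. On $B(x_j,2r)$ we have $f_k\le M$, and reflecting through the centre, $f_k(z)\ge 2f_k(x_j)-M$ for $z\in B(x_j,r)$; since $f_k(x_j)\to\bar f(x_j)$, the numbers $f_k(x_j)$ are bounded, so the $f_k$ are, for large $k$, uniformly bounded above and below on each $B(x_j,r)$. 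A convex function bounded in absolute value by $C$ on a ball is Lipschitz on the concentric half-ball with a constant controlled by $C$ and $r$ only; this yields a single $L$ with $|f_k(\theta)-f_k(\theta')|\le L\|\theta-\theta'\|$ for $\theta,\theta'\in\Theta$ and all large $k$, and enlarging $L$ the same holds for $\bar f$ on $\Theta$. A routine net argument now concludes: given $\varepsilon>0$, cover $\Theta$ by balls of radius $\varepsilon/(3L)$ with centres $\theta_1,\dots,\theta_n\in\mathcal{D}$, choose $K$ with $|f_k(\theta_i)-\bar f(\theta_i)|<\varepsilon/3$ for all $i$ and $k\ge K$, and for arbitrary $\theta\in\Theta$ insert the nearest centre to get $|f_k(\theta)-\bar f(\theta)|<\varepsilon$. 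Thus $\sup_{\theta\in\Theta}|f_k(\theta)-\bar f(\theta)|\to 0$ on $\Omega_0$, i.e.\ with probability one.

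I expect the only real obstacle to be the equi-Lipschitz step: convexity together with the eventual uniform upper bound is exactly what makes a $k$-uniform modulus of continuity available, but one must anchor the reflection argument at points of the dense set $\mathcal{D}$ in order to also control the functions from below uniformly in $k$. The remaining ingredients --- the countable-intersection reduction, Jensen's inequality, and the Arzel\`a--Ascoli-type finish --- are standard; alternatively one may simply invoke \cite[Theorem~10.8]{rockafellar_convex_1970} pathwise on $\Omega_0$.
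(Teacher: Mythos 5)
Your proof is correct, and in outline it is the same as the paper's: reduce to a deterministic statement about convex functions by passing to a full-measure event, then use the fact that pointwise convergence of finite convex functions on an open convex set implies uniform convergence on compacta. The paper simply cites \cite[Theorem~10.8]{rockafellar_convex_1970} for that deterministic fact and restricts ``to the set of realizations where pointwise convergence holds''; you instead reprove the Rockafellar result from scratch (upper bound via a polytope and convex combinations, lower bound via reflection through interior anchor points, equi-Lipschitz bound, $\varepsilon/3$ net argument), which is sound apart from trivial radius bookkeeping in the step where the local Lipschitz estimates on the balls $B(x_j,r/2)$ are combined with the net. Your version buys two things the paper's one-liner glosses over: (i) self-containedness, and (ii) a correct treatment of the null-set issue --- by intersecting only over a countable dense set $\mathcal{D}$ and exploiting that Theorem~10.8 needs convergence only on a dense subset, your argument works under the weaker reading of the hypothesis in which $f_k(\theta)\to\bar f(\theta)$ almost surely \emph{for each fixed} $\theta$, whereas the paper implicitly assumes a single almost-sure event on which convergence holds for all $\theta$ simultaneously. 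One small point to make explicit: the Lipschitz constant for $\bar f$ on $\Theta$ follows either from passing your uniform bounds on $f_k$ to the limit or from noting that $\bar f$, as a finite pointwise limit of convex functions, is itself convex and hence locally Lipschitz on $\mathcal{S}$.
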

\begin{proof}
    This lemma follows from \cite[Theorem 10.8]{rockafellar_convex_1970} by restricting to
    the set of realizations where pointwise convergence holds, which has probability one
    by assumption.
\end{proof}
Combining these two lemmas allows us to provide proof for
Lemma~\ref{lemma:epigraphical_convergence}.
\begin{proof}[Proof of Lemma \ref{lemma:epigraphical_convergence}]
    The cost function in problem \eref{eq:convex_mom} is strictly convex (see
    proof of Lemma~\ref{lemma:unique_minimizer}). The set
    of feasible parameters is compact and convex. From
    \lref{lemma:convergence_M}, we know that $\hat M_\infty$ converges with
    probability one. Since the cost function is a continuous mapping of $\hat
    M_\infty$, we conclude that the cost function converges pointwise with
    probability one. Hence, the conditions of
    Lemma~\ref{lemma:pointwise_to_uniform} are fulfilled.  This in turn
    fulfills the conditions of Lemma~\ref{lemma:convergence_minimizers} which
    allows us to conclude that $\hat A$ will tend to $A$ with probability one.
\end{proof}

\subsection{Convergence of $\hat P$ and $\hat \pi_0$}

From Lemma~\ref{lemma:epigraphical_convergence}, we know that $\hat
A \rightarrow A$ as the number of samples tends to infinity. Since the mapping
from $A$ to $P$ and $\pii$ is unique (see \lref{lemma:unique_mapping}), we
conclude that the estimates of $P$ and $\pii$ also will converge. In summary,
this concludes the proof of \thrmref{thrm:convergence_P_pi}.

\section{Proof of \thrmref{thrm:as_eff}}

Parts of the proof are inspired by \cite{kontorovich_learning_2013}.

\subsubsection{Central limit theorem for $\hMii$}

We will show that a central limit theorem holds for the estimates. For this, we
employ the following theorem from \cite{jones_markov_2004}:
\begin{theorem}[Corollary 5 of \cite{jones_markov_2004}]
    Consider a uniformly ergodic Markov chain on $\mathcal{X}$ with stationary
    distribution $\pi_\infty$. Suppose $\mathbb{E}_{\pi_\infty} f^2(x)
    < \infty$, where $f:\mathcal{X}\to\mathbb{R}$. Then for any initial distribution, as $N \to \infty$,
    \begin{equation}
        \sqrt{N}(\bar f_N - \mathbb{E}_{\pi_\infty} f) \to_d \mathcal{N}(0,
        \sigma^2_f),
    \end{equation}
    where $\bar f_N = N^{-1} \sum_{k=1}^N f(x_k)$ and $\sigma_f^2 < \infty$ is
    a constant.
    \label{thrm:general_clt}
\end{theorem}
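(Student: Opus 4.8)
The plan is to prove this classical Markov chain central limit theorem by the method of \emph{martingale approximation}, reducing it to a martingale central limit theorem via the Poisson equation. Throughout, write $\mu = \mathbb{E}_{\pi_\infty} f$, let $\bar f = f - \mu$ denote the centered function, and let $K$ be the transition operator of the chain, $(Kg)(x) = \mathbb{E}[g(x_1) \mid x_0 = x]$. Since $\sqrt{N}(\bar f_N - \mu) = N^{-1/2} \sum_{k=1}^N \bar f(x_k)$, it suffices to establish a CLT for the centered partial sums of $f(x_k)$.

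First I would solve the Poisson equation $g - Kg = \bar f$. Uniform ergodicity is precisely the statement that $\sup_x \| K^n(x, \cdot) - \pi_\infty \|_{TV} \le C \rho^n$ for some $\rho \in (0,1)$, which forces the candidate solution $g = \sum_{n=0}^\infty K^n \bar f$ to converge, with $g \in L^2(\pi_\infty)$ whenever $\mathbb{E}_{\pi_\infty} f^2 < \infty$. With such a $g$ in hand, the key algebraic identity is the telescoping decomposition
\begin{equation}
    \sum_{k=1}^N \bar f(x_k) = \sum_{k=1}^N \big[ g(x_k) - (Kg)(x_{k-1}) \big] + (Kg)(x_0) - (Kg)(x_N).
\end{equation}
The increments $D_k = g(x_k) - (Kg)(x_{k-1})$ form a stationary, ergodic, square-integrable martingale-difference sequence for the natural filtration, since $\mathbb{E}[D_k \mid x_0, \dots, x_{k-1}] = (Kg)(x_{k-1}) - (Kg)(x_{k-1}) = 0$ by construction. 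The boundary term $(Kg)(x_0) - (Kg)(x_N)$ is $\mathcal{O}_p(1)$ and hence vanishes after division by $\sqrt{N}$.

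It then remains to apply a central limit theorem for stationary ergodic martingale differences (e.g., the Billingsley--Ibragimov theorem) to $N^{-1/2} \sum_{k=1}^N D_k$. This yields convergence $\to_d \mathcal{N}(0, \sigma_f^2)$ with $\sigma_f^2 = \mathbb{E}_{\pi_\infty} D_1^2$, which one verifies coincides with the usual long-run variance $\sigma_f^2 = \text{Var}_{\pi_\infty}(f) + 2 \sum_{k=1}^\infty \text{Cov}_{\pi_\infty}(f(x_0), f(x_k))$, finite because $g \in L^2(\pi_\infty)$ and $K$ is an $L^2(\pi_\infty)$-contraction. The claim that the conclusion holds \emph{for any initial distribution} then follows from uniform ergodicity: the chain forgets its start geometrically fast, so replacing $\pi_\infty$ by an arbitrary $\pi_0$ as the law of $x_0$ perturbs the partial sums only by an $\op(\sqrt{N})$ term (formally, one couples the two chains, or re-runs the martingale argument from the arbitrary start and checks the remainder is still $\mathcal{O}_p(1)$).

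The main obstacle I anticipate is handling the moment hypothesis: only $\mathbb{E}_{\pi_\infty} f^2 < \infty$ is assumed, not boundedness of $f$, so the convergence and $L^2$-membership of the Poisson solution $g = \sum_n K^n \bar f$ must be argued carefully from uniform ergodicity rather than from a trivial sup-norm bound, and the applicable martingale CLT must be the $L^2$ (stationary ergodic) version rather than one requiring $2+\delta$ moments. An attractive alternative that sidesteps the Poisson equation is the \emph{regeneration} approach underlying Jones's corollary: uniform ergodicity makes the whole state space small, inducing i.i.d. regeneration blocks with geometrically light tails; a classical i.i.d. CLT applied to the block sums together with Anscombe's theorem for the random number of regenerations in $N$ steps then delivers the result, with the $L^2$ assumption on $f$ ensuring the block sums have finite variance.
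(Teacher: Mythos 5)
First, be aware that the paper contains \emph{no} proof of this statement: it is imported verbatim as Corollary 5 of \cite{jones_markov_2004} and used as a black box inside the proof of \thrmref{thrm:as_eff}, so your attempt can only be measured against the cited literature, not against an in-paper argument. That said, your martingale-approximation route (Poisson equation plus the Billingsley--Ibragimov CLT for stationary ergodic martingale differences) is a classical and legitimate way to prove this result, and it is genuinely different from the regeneration/uniform-mixing arguments that underlie Jones's corollary -- which you correctly identify as the alternative in your closing paragraph. The telescoping decomposition, the martingale-difference property of $D_k$, the $\mathcal{O}_p(1)$ boundary term, the identification of $\mathbb{E}_{\pi_\infty} D_1^2 = \|g\|_{L^2(\pi_\infty)}^2 - \|Kg\|_{L^2(\pi_\infty)}^2$ with the long-run variance, and the coupling argument for an arbitrary initial law are all sound.

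The one step you flag but do not close is, however, the crux of the whole approach, and leaving it open is a real gap: for merely square-integrable $f$, uniform ergodicity as a total-variation statement gives only $\|(K^n - \Pi)h\|_\infty \le 2C\rho^n \|h\|_\infty$ for \emph{bounded} $h$ (where $\Pi h \equiv \mathbb{E}_{\pi_\infty} h$), and says nothing pointwise about $K^n \bar f$, so the $L^2(\pi_\infty)$-convergence of $g = \sum_{n \ge 0} K^n \bar f$ does not follow from anything you wrote. The standard repair is interpolation: $K^n - \Pi$ has $L^1(\pi_\infty) \to L^1(\pi_\infty)$ norm at most $2$ (both $K^n$ and $\Pi$ are $L^1(\pi_\infty)$-contractions, by Jensen's inequality and stationarity of $\pi_\infty$), and $L^\infty \to L^\infty$ norm at most $2C\rho^n$ by uniform ergodicity; Riesz--Thorin then yields
\begin{equation}
    \| K^n - \Pi \|_{L^2(\pi_\infty) \to L^2(\pi_\infty)} \; \le \; \sqrt{2}\,\sqrt{2C\rho^n} \; = \; 2\sqrt{C}\,\rho^{n/2},
\end{equation}
so the Poisson series converges geometrically in $L^2(\pi_\infty)$, $g$ and $Kg$ are square integrable, and everything downstream of that point goes through; this bound also gives the absolute convergence of the covariance series in your variance identity. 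Finally, note that for the purpose this theorem serves in the paper, the chain is the finite lumped chain $z_k$ and the functions $f_{ij}$ are indicators, hence bounded; in that setting the sup-norm bound suffices and the subtlety you rightly worried about is vacuous.
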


As in the proof of \lref{lemma:convergence_M}, denote the state of the lumped
Markov chain as $z_k = (y_k, y_{k+1}, x_{k+1}) \in \mathcal{Z}
= \mathcal{Y}\times\mathcal{Y}\times\mathcal{X}$ and let the functions $f_{ij}$
be given by
\begin{equation}
    f_{ij}(z_k) = \ind{y_k = i, y_{k+1} = j}.
\end{equation}
Then, as guaranteed by \thrmref{thrm:general_clt} ($z_k$ is uniformly ergodic
since it is finite -- see \cite[Example 1]{jones_markov_2004}),
\begin{equation}
    \sqrt{N} \Bigg( \frac{1}{N} \sum_{k=0}^{N-1} f_{ij}(z_k)
    - \sum_{z\in\mathcal{Z}} \tilde \pi_\infty(z) f_{ij}(z) \Bigg) \to_d
    \mathcal{N}(0, \sigma_{ij}^2),
\end{equation}
or by changing back to the original variables,
\begin{align}
    \sqrt{N} \Bigg( \frac{1}{N} &\sum_{k=0}^{N-1} \ind{y_k = i, y_{k+1} = j}
    \notag \\
    &- \lim_{k\to\infty} \Pr{y_k = i, y_{k+1} = j} \Bigg) \notag \\
    &\to_d \mathcal{N}(0, \sigma_{ij}^2),
\end{align}
i.e.,
\begin{equation}
    \sqrt{N} \Big( [\hMii]_{ij} - [M_\infty]_{ij} \Big) \to_d \mathcal{N}(0, \sigma_{ij}^2),
    \label{eq:hMii_convergence_in_p}
\end{equation}
where $\sigma_{ij}^2 < \infty$ are constants.

\subsubsection{$\sqrt{N}$-consistency of $\hat M_\infty$} We now establish that
$\hMii$ is a $\sqrt{N}$-consistent estimator using the above result and the
following lemma.
\begin{lemma}[Appendix A, \cite{pollard_convergence_1984}]
    If a sequence of random variables $Z_N$ and a constant $z_0$ tend in
    distribution to another random variable $Z$ (as $N \to \infty$) according
    to
    \begin{equation}
        \sqrt{N}(Z_N - z_0) \to_d Z,
    \end{equation}
    then
    \begin{equation}
        Z_n - z_0 = \Op(N^{-1/2}).
    \end{equation}
\end{lemma}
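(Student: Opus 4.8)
The plan is to deduce that $\sqrt{N}(Z_N - z_0)$ is stochastically bounded and then simply divide by $\sqrt{N}$. Set $W_N := \sqrt{N}(Z_N - z_0)$, so by hypothesis $W_N \to_d Z$; the entire content of the lemma is then the elementary fact that a weakly convergent sequence of real random variables is uniformly tight, i.e. $W_N = \Op(1)$. First I would record the easy reduction: once we know that for every $\varepsilon > 0$ there is a finite $K_\varepsilon$ with $\sup_N \Pr{|W_N| > K_\varepsilon} \le \varepsilon$, the conclusion is immediate, because $|Z_N - z_0| = N^{-1/2}|W_N|$ gives $\Pr{\sqrt{N}\,|Z_N - z_0| > K_\varepsilon} \le \varepsilon$ for \emph{all} $N$, which is exactly the definition of $Z_N - z_0 = \Op(N^{-1/2})$.

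To establish the uniform tightness of $\{W_N\}$ I would proceed as follows. By the continuous mapping theorem $|W_N| \to_d |Z|$. Fix $\varepsilon > 0$ and pick a continuity point $K_0$ of the distribution function of $|Z|$ with $\Pr{|Z| > K_0} < \varepsilon/2$; this is possible since the discontinuity set of a distribution function is countable and $\Pr{|Z| > K} \to 0$ as $K \to \infty$. Evaluating the weak convergence at the continuity point $K_0$ yields $\Pr{|W_N| > K_0} \to \Pr{|Z| > K_0} < \varepsilon/2$, so there is an index $N_\varepsilon$ with $\Pr{|W_N| > K_0} < \varepsilon$ for all $N \ge N_\varepsilon$. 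For the finitely many remaining indices $N < N_\varepsilon$, each $W_N$ is a single random variable and hence tight on its own, so one can choose a finite $K_1$ with $\Pr{|W_N| > K_1} < \varepsilon$ for every such $N$. Taking $K_\varepsilon := \max\{K_0, K_1\}$ then works uniformly in $N$, and combining this with the deterministic scaling $Z_N - z_0 = N^{-1/2} W_N$ finishes the proof via the product rule for stochastic-order symbols.

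The only point needing any care — and the closest thing to an obstacle — is precisely the splitting at $N_\varepsilon$: convergence in distribution supplies the tail bound only eventually, so the initial segment of the sequence must be absorbed by a separate (trivial) tightness argument. Apart from that the proof is pure bookkeeping with continuity points of the limiting CDF, and nothing in it is specific to the HMM setting; the lemma is a generic probabilistic tool, invoked here only to upgrade the central limit theorem \eref{eq:hMii_convergence_in_p} to the $\sqrt{N}$-rate statement $\hMii = M_\infty + \Op(N^{-1/2})$ that the subsequent steps require.
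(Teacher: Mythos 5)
Your proof is correct and complete: the reduction to showing $W_N = \sqrt{N}(Z_N - z_0) = \Op(1)$, the tightness argument via continuity points of the CDF of $|Z|$, and the separate absorption of the finitely many initial indices are exactly the standard argument behind this fact. The paper itself offers no proof at all — it simply cites the result from Pollard's Appendix A — so there is nothing to diverge from; if anything, your argument proves slightly more than the paper needs, since the paper's stated convention for $\Op$ only requires the tail bound for all $N$ sufficiently large, which makes your splitting at $N_\varepsilon$ unnecessary under that convention.
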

Leveraging the above lemma, we conclude that
\begin{equation}
    [\hMii]_{ij} = [M_\infty]_{ij} + \mathcal{O}_p(N^{-1/2}).
    \label{eq:ordo_p_theta}
\end{equation}
This, by definition, means that for every $\varepsilon > 0$, we can find
a constant $c_{ij}(\varepsilon)$ such that for all $N$ sufficiently large,
\begin{equation}
    \PR{\sqrt{N} \big| [\hMii]_{ij} - [M_\infty]_{ij} \big| > c_{ij}(\varepsilon)} < \varepsilon.
\end{equation}

\comment{
\subsubsection{(Original) $\sqrt{N}$-consistency of $\hat M_\infty$} We now establish that
$\hMii$ is a $\sqrt{N}$-consistent estimator using the above result.

\begin{theorem}[Slutsky's theorem]
    If a sequence of random variables $X_n$ tends in distribution to another
    random variable $X$, i.e., $X_n \to_d X$, then $X_n = \mathcal{O}_p(1)$.
    \label{thrm:slutsky}
\end{theorem}

\comment{
In our setting, we employ \thrmref{thrm:slutsky} for an estimator $\hat \theta$
that follows a central limit theorem,
\begin{equation}
    \sqrt{N}(\hat \theta - \theta_0) \to_d \mathcal{N}(0, \sigma^2),
\end{equation}
i.e., $\hMii$ as in equation \eref{eq:hMii_convergence_in_p}, to conclude that
\begin{equation}
    \hat \theta = \theta_0 + \mathcal{O}_p(N^{-1/2}).
    \label{eq:ordo_p_theta}
\end{equation}
}
We employ \thrmref{thrm:slutsky} in conjunction with equation
\eref{eq:hMii_convergence_in_p} to conclude that 
\begin{equation}
    \sqrt{N} \Big( [\hMii]_{ij} - [M_\infty]_{ij} \Big) = \mathcal{O}_p(1),
\end{equation}
or equivalently that
\begin{equation}
    [\hMii]_{ij} = [M_\infty]_{ij} + \mathcal{O}_p(N^{-1/2}).
    \label{eq:ordo_p_theta}
\end{equation}
This, by definition, means that for every $\varepsilon > 0$, we can find
a constant $c_{ij}(\varepsilon)$ such that for $N$ sufficiently large,
\begin{equation}
    \PR{\sqrt{N} \big| [\hMii]_{ij} - [M_\infty]_{ij} \big| > c_{ij}(\varepsilon)} < \varepsilon.
\end{equation}
}

\subsubsection{$\sqrt{N}$-consistency of $\hat A$}
\label{ssec:sqrtN_consistencyA}
We now propagate the
$\sqrt{N}$-consistency of $\hMii$ to the variable $\hat A$ through the
optimization problem \eref{eq:convex_mom}.

First note that problem \eref{eq:convex_mom} can be rewritten on the standard
form for a \emph{quadratic program} (QP),
\begin{align}
    \min_x \quad & \frac{1}{2} x^T Q x - q^T x \notag \\
    \text{s.t.} \quad & Gx \leq g,  \notag \\
                      & Dx = d,
    \label{eq:nominal_QP}
\end{align}
where $Q$ is a positive definite matrix. In particular, using the
identity (for arbitrary matrices $A$, $B$ and $C$ of appropriate dimensions, see, e.g.,
\cite{horn_topics_1991}) 
\begin{equation}
    \vec{ABC} = (C^T \otimes A) \vec{B},
\end{equation}
we have that 
\begin{align*}
    \| \hat M_\infty &- B^T A B \|_F^2 = \| \vec{\hMii - B^T A B} \|_2^2 \\
                                      &= \| \vec{\hMii} - \vec{B^T AB} \|_2^2 \\
                                      &= \| \vec{\hMii} - (B \otimes B)^T \vec{A} \|_2^2 \\
                                      &= \vec{\hMii}^T \vec{\hMii} \\
                                      &- 2 \vec{\hMii}^T (B \otimes B)^T \vec{A} \\
                                      &+ \vec{A}^T (B \otimes B) (B
    \otimes B)^T \vec{A}, \stepcounter{equation}\tag{\theequation}
    \label{eq:vectorization_of_cost}
\end{align*}
so that,
\begin{align}
    Q &= 2 (B \otimes B) (B \otimes B)^T, \\
    \hat q &= 2 (B \otimes B) \vec{\hMii},
\end{align}
where $x = \vec{A}$, and the constant term has been dropped. The constraints can
similarly be translated by vectorization, e.g., $\ones^T A \ones = 1$
translates to $\ones^T \vec{A} = 1$, i.e., $\ones^T x = 1$.

The uncertainty in this problem, resulting from the estimation procedure, lies
in the estimate of the moments $\hMii$. Note that this only influences the cost
function -- not the constraints. We now ask ourselves how the uncertainty in
$\hMii$ propagates through the QP into our variable of interest $\hat A$.

Denote the minimizer of the nominal problem \eref{eq:nominal_QP}, where
$M_\infty$ is used instead of $\hMii$, as $x^*$ (= $\vec{A}$) and let
\begin{equation}
    \hat x^* = \argmin_x \frac{1}{2} x^T Q x - \hat q^T x,
\end{equation}
subject to the same constraints as in problem \eref{eq:nominal_QP}. Then
\cite[Theorem 2.1]{daniel_stability_1973} provides the following bound on the distance
between the solution of the nominal QP and the solution of the perturbed QP:
\begin{equation}
    \|x^* - \hat x^*\|_2 \leq \frac{\delta}{\lambda - \delta} (1
    + \|x^*\|_2),
    \label{eq:qp_bound}
\end{equation}
where $\delta = \|q - \hat q\|_2$ and $\lambda$ is the smallest eigenvalue
of $Q$.\footnote{$\lambda > \delta$ holds if $N$ is large
    enough (so that $\delta$ is small enough).}

Let $\sigma_1(\cdot)$ denote the largest singular value, then we note that (for
every $\varepsilon > 0$)
\begin{align*}
    \delta &= \|q-\hat q\|_2 \notag \\
                   &= \| 2(B\otimes B) \vec{M_\infty} - 2(B\otimes B) \vec{\hMii} \|_2 \\
                   &= 2\, \| (B\otimes B) (\vec{M_\infty} - \vec{\hMii}) \|_2 \\
                   &\leq 2\,\sigma_1(B \otimes B) \; \|\vec{M_\infty} - \vec{\hMii}\|_2 \\
                   &\leq 2\,\sigma_1(B \otimes B) \; \|\vec{M_\infty} - \vec{\hMii}\|_1 \\
                   &= 2\,\sigma_1(B \otimes B) \; \sum_{i,j\in\mathcal{Y}}\big| [M_\infty]_{ij} - [\hMii]_{ij}\big| \\
                   &\leq 2\,\sigma_1(B \otimes B) \; \sum_{i,j\in\mathcal{Y}} \frac{c_{ij}(\varepsilon)}{\sqrt{N}} \\
                   &\leq \frac{1}{\sqrt{N}} \; 2 \, \sigma_1(B \otimes B) \; Y^2 \max_{i,j} c_{ij}(\varepsilon) \\
                   &\overset{\text{def.}}{=} \frac{1}{\sqrt{N}} \; K(\varepsilon),
                    \stepcounter{equation}\tag{\theequation}
\end{align*}
with probability greater than $1-\varepsilon$, where $c_{ij}(\varepsilon)$ are
the constants in the stochastic order \eref{eq:ordo_p_theta}. Also note that
\begin{equation}
    \|x^*\|_2 = \| \vec{A} \|_2 \leq \| \ones_{X^2} \|_2 = \sqrt{X^2} = X,
\end{equation}
due to the sum-to-one constraint of $A$.

Hence, for every $\varepsilon > 0$ (and $N$ sufficiently large), we have in the
bound \eref{eq:qp_bound}, that
\begin{align}
    \|A - \hat A\|_F &= \|\vec{A} - \vec{\hat A}\|_2 \notag \\
                     &= \|x^* - \hat x^*\|_2 \notag \\
                     &\leq \frac{\delta}{\lambda - \delta} (1 + \|x^*\|_2) \notag \\
                     &\leq \frac{\delta}{\lambda} \; (1 + \|x^*\|_2) \notag \\
                     &\leq \frac{1}{\sqrt{N}} \; \frac{K(\varepsilon) \; (1
+ X)}{\lambda}
    \label{eq:A_hatA_bound}
\end{align}
with probability greater than $1 - \varepsilon$. This shows that\comment{\footnote{Note
    that for $A \in \mathbb{R}^{X\times Y}$, $\|A\|_F \leq M \Rightarrow |a_{ij}| \leq M$ and $|a_{ij}| \leq M \Rightarrow
\|A\|_F \leq \sqrt{XY} M$, so that a stochastic bound on the Frobenius norm is
equivalent to a stochastic elementwise bound.}}
\begin{equation}
    \hat A = A + \mathcal{O}_p(N^{-1/2}).
\end{equation}

\subsubsection{$\sqrt{N}$-consistency of $\hat \pi_\infty$ and $\hat P$}
\label{ssec:error_propagation}

Again, for any $\varepsilon > 0$, we have using equation \eref{eq:recover_pi} that
\begin{align}
    \|\pi_\infty - \hat \pi_\infty \|_2 &= \| A\ones - \hat A \ones \|_2 \notag \\
                                        &= \| (A - \hat A) \ones \|_2 \notag \\
                                        &\leq \Big(\max_{\|y\|_2 = 1} \|(A-\hat A)y\|_2\Big) \|\ones\|_2 \notag \\
                                        &\leq \|A-\hat A\|_F \|\ones\|_2 \notag \\
                                        &= \|A-\hat A\|_F \sqrt{X} \notag \\
                                        &\leq \frac{1}{\sqrt{N}} \; \frac{K(\varepsilon) \; (1
+ X)\sqrt{X}}{\lambda}
\label{eq:pi_hatPi_bound}
\end{align}
holds with probability greater than $1-\varepsilon$.

Continuing, equations \eref{eq:recover_pi} and \eref{eq:recover_P} tell us that
\begin{equation}
    \|P - \hat P\|_F = \|\diag{A \ones}^{-1} A - \diag{\hat A \ones}^{-1} \hat A\|_F.
\end{equation}
We will use that, for two invertible diagonal matrices $D_1$ and $D_2$, and
arbitrary matrices $X$ and $Y$, it holds that
\begin{align}
    \|D_1^{-1}X &- D_2^{-1}Y\|_F = \|D_1^{-1}D_2^{-1} (D_2 X - D_1 Y) \|_F \notag \\
                                &\leq \|D_1^{-1} D_2^{-1}\|_F \| D_2 X - D_1 Y \|_F \notag \\
                                &\leq \|D_1^{-1}\|_F \|D_2^{-1}\|_F \| D_2 X - D_1 Y + D_1 X - D_1 X \|_F \notag \\
                                &= \|D_1^{-1}\|_F \|D_2^{-1}\|_F \| (D_2 - D_1) X + D_1 (X-Y) \|_F \notag \\
                                &\leq \|D_1^{-1}\|_F \|D_2^{-1}\|_F \notag \\
                                &\quad\quad \times \Big( \|
    D_2 - D_1\|_F \|X\|_F + \|D_1\|_F \|X-Y \|_F \Big). 
\end{align}
This yields
\begin{align}
    \|P - \hat P\|_F \leq \|&\diag{A \ones}^{-1}\|_F \|\diag{\hat A \ones}^{-1}\|_F \notag \\
    &\times \Big( \|\hat A \ones - A \ones\|_2 \|A\|_F + \|A\ones\|_2 \|A - \hat A\|_F \Big),
\end{align}
where the first factor is bounded
by a constant due to the ergodicity assumptions (the stationary distribution
has strictly positive elements) and the terms in the parenthesis have trivial
bounds, or can be bounded using equations \eref{eq:A_hatA_bound} and
\eref{eq:pi_hatPi_bound}. Hence, for any
$\varepsilon > 0$, we have that $\|P - \hat P\|_F
\leq \frac{\text{constant}}{\sqrt{N}}$ with probability greater than
$1-\varepsilon$, or equivalently that,
\begin{equation}
    \hat P = P + \mathcal{O}_p(N^{-1/2}).
    \label{eq:Op_P}
\end{equation}

\subsubsection{$\Delta$-method}

Assume that the parametrization of the transition matrix is continuous and differentiable,
and denote by $\theta^*$ the true parameters. We can then propagate relation
\eref{eq:Op_P} to the parameters $\theta$ to obtain
\begin{equation}
    \hat \theta_\text{MM} = \theta^* + \mathcal{O}_p(N^{-1/2}),
\end{equation}
using the $\Delta$-method -- in particular, the first part of the proof of Theorem 3.1 in
\cite{vaart_asymptotic_1998}.  

\subsubsection{Regularity of the likelihood function}
\label{ssec:regularity_loglik}
Denote the Fisher information matrix as $I_F(\theta^*)$. Then Theorems
12.5.5 and 12.5.6 of \cite{cappe_inference_2005} guarantee that we have
a central limit theorem for the score function and a law of large numbers for
the observed information matrix, as follows:
\begin{align}
    N^{-1/2} \nabla_\theta l_N(\theta^*) &\rightarrow_d \mathcal{N}(0,
    I_F(\theta^*), \\
    N^{-1} \nabla_\theta^2 l_N(\theta^*) &\rightarrow_p -I_F(\theta^*),
\end{align}
as $N \rightarrow \infty$, since our chain is finite and $P, B > 0$.

\comment{
We need to verify the assumptions of the theorems:
\begin{itemize}
    \item[12.0.1] (i) and (ii) follows from that we consider finite HMMs
        (existence of transition/observation densities).
        (iii) from the assumption of $P > 0$ and $B > 0$.
    \item[12.2.1] (i) from $P > 0$. (ii) from $B > 0$.
    \item[12.3.1] follows from finite HMM. for $b^+$, $g_\theta \leq 1$, and
        $b^- > 0$ due to $B > 0$, so $\log b^-$ is bounded.
    \item[12.5.1] the parametrization is direct (every element of $P$ is one
        element of $\theta$). (i) are ok by direct, or exponential parameterization
        (composition of continuous functions) -- everything is infinitely
        differentiable. (ii) and (iii) note that $\nabla_\theta \log q_\theta
        \sim \frac{1}{q_\theta} \nabla_\theta q_\theta$ and since $q_\theta
        > 0$ and $\nabla_\theta q_\theta = 1$ (direct parametrization), it is
        ok. Same goes for the second derivative. (iv) all
        elements of $g_\theta(x,y) = B_{x,y}$ are bounded by 1. 
\end{itemize}
}

\subsubsection{Asymptotic efficiency by Newton-Raphson}

\begin{lemma}
    Let $\thetainit = \theta^* + \mathcal{O}_p(N^{-1/2})$. Then, one
    Newton-Raphson step starting in $\thetainit$ on $l_N(\theta)$ gives an
    asymptotically efficient estimator, i.e., with 
    \begin{equation}
        \hat \theta_\text{NR} = \thetainit - \big[\nabla_\theta^2 l_N(
        \thetainit)\big]^{-1} \; \nabla_\theta l_N(\thetainit),
    \end{equation}
    we get
    \begin{equation}
        \sqrt{N}(\hat \theta_\text{NR} - \theta^*) \rightarrow_d
        \mathcal{N}(0, I_F^{-1}(\theta^*)).
    \end{equation}
    \label{lemma:newton_efficient}
\end{lemma}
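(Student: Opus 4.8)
The plan is to Taylor-expand the score $\gradlik$ at $\thetainit$ around the true value $\theta^*$ and substitute the expansion into the Newton-Raphson update. Concretely, I would first apply the scalar mean-value theorem to each component of $\gradlik$; this produces, for each component, an intermediate point lying on the segment between $\theta^*$ and $\thetainit$, and collecting the corresponding rows of $\hesslik$ evaluated at these points into a single matrix $\bar H_N$ gives
\begin{equation*}
    \gradlik(\thetainit) = \gradlik(\theta^*) + \bar H_N\,(\thetainit - \theta^*).
\end{equation*}
Substituting this into $\thetaNR = \thetainit - \big[\hesslik(\thetainit)\big]^{-1}\gradlik(\thetainit)$ and multiplying through by $\sqrt N$ yields the identity
\begin{align*}
    \sqrt N\,(\thetaNR - \theta^*) &= -\big[N^{-1}\hesslik(\thetainit)\big]^{-1}\,N^{-1/2}\gradlik(\theta^*) \\
    &\quad + \Big(I - \big[N^{-1}\hesslik(\thetainit)\big]^{-1}\,N^{-1}\bar H_N\Big)\,\sqrt N\,(\thetainit - \theta^*),
\end{align*}
which I would then analyze term by term.

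For the second term, the hypothesis $\thetainit = \theta^* + \Op(N^{-1/2})$ implies $\thetainit \to_p \theta^*$, and since every intermediate point is trapped between $\theta^*$ and $\thetainit$, the evaluation points of $\bar H_N$ also converge to $\theta^*$ in probability. Invoking a locally uniform law of large numbers for the observed information in a neighbourhood of $\theta^*$ (the neighbourhood version of \cite[Theorem 12.5.6]{cappe_inference_2005}, applicable under the finiteness and positivity conditions verified above) then gives $N^{-1}\hesslik(\thetainit) \to_p -\Fisher(\theta^*)$ and $N^{-1}\bar H_N \to_p -\Fisher(\theta^*)$. As $\Fisher(\theta^*)$ is invertible, the continuous-mapping theorem yields $\big[N^{-1}\hesslik(\thetainit)\big]^{-1} \to_p -\Fisher^{-1}(\theta^*)$ and hence $\big[N^{-1}\hesslik(\thetainit)\big]^{-1} N^{-1}\bar H_N \to_p I$, so the matrix multiplying $\sqrt N(\thetainit-\theta^*)$ is $\op(1)$; combined with $\sqrt N(\thetainit - \theta^*) = \Op(1)$, the whole second term is $\op(1)$.

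For the first term, \cite[Theorem 12.5.5]{cappe_inference_2005} gives $N^{-1/2}\gradlik(\theta^*) \to_d \mathcal{N}(0,\Fisher(\theta^*))$, and combining this with $\big[N^{-1}\hesslik(\thetainit)\big]^{-1} \to_p -\Fisher^{-1}(\theta^*)$ via Slutsky's theorem shows the first term converges in distribution to $\Fisher^{-1}(\theta^*)\,\mathcal{N}(0,\Fisher(\theta^*)) = \mathcal{N}(0,\Fisher^{-1}(\theta^*))$. Adding the $\op(1)$ remainder and applying Slutsky's theorem once more then yields $\sqrt N(\thetaNR - \theta^*) \to_d \mathcal{N}(0,\Fisher^{-1}(\theta^*))$, which is the claim. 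I expect the one genuinely delicate point to be the assertion that $N^{-1}\hesslik$ still converges to $-\Fisher(\theta^*)$ when evaluated at the data-dependent points $\thetainit$ and the mean-value points inside $\bar H_N$: pointwise convergence at the fixed truth is not sufficient, so one must either appeal to a locally uniform LLN for the Hessian (available in the framework of \cite{cappe_inference_2005} under the assumptions already checked) or supply a stochastic equicontinuity argument; once that is in place, the remainder of the proof is routine manipulation with stochastic-order symbols and Slutsky's theorem.
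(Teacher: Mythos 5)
Your proposal is correct and follows essentially the same route as the paper's proof: expand the score at $\thetainit$ around $\theta^*$, replace the normalized Hessians by $-\Fisher(\theta^*)$ via \cite[Theorem 12.5.6]{cappe_inference_2005} so that the term carrying $\sqrt{N}(\thetainit-\theta^*)=\Op(1)$ is killed by an $\op(1)$ factor, and apply \cite[Theorem 12.5.5]{cappe_inference_2005} with Slutsky's theorem to the remaining $-[N^{-1}\hesslik(\thetainit)]^{-1}N^{-1/2}\gradlik(\theta^*)$ term. If anything, your componentwise mean-value expansion and your explicit appeal to a locally uniform law of large numbers for the observed information at the data-dependent points are more careful than the paper's displayed computation, which absorbs these issues into unexplained $\op(1)$ remainders.
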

(Proof on next page)
\clearpage
\begin{proof}
    We have that
    \begin{align}
        \sqrt{N}(\thetaNR - \theta^*) &= \sqrt{N}(\thetainit - \theta^*)
        - \sqrt{N} \big[\hesslik(\thetainit) \big]^{-1} \gradlik(\thetainit)
        \notag \\
        &= \sqrt{N}(\thetainit - \theta^*) - N^{-1/2} \big[ -\Fisher(\theta^*)
    + \op(1) \big]^{-1} \big[ \gradlik(\theta^*)
        + \hesslik(\theta^*)(\thetainit - \theta^*)
    + \op(1) \big] \notag \\
    &= \sqrt{N}(\thetainit - \theta^*) - \big[ - \Fisher(\theta^*)
+ \op(1)\big]^{-1} \Big[ N^{-1/2} \gradlik(\theta^*)
    + \sqrt{N}[-\Fisher(\theta^*) + \op(1)](\thetainit - \theta^*) + \op(1)
\Big] \notag \\
&= \sqrt{N}(\thetainit - \theta^*) + \Fisher^{-1}(\theta^*) \Big[ N^{-1/2}
\gradlik(\theta^*) - \Fisher(\theta^*) \sqrt{N} (\thetainit - \theta^*) \Big]
+ \op(1) \notag \\
&= \Fisher^{-1}(\theta^*) N^{-1/2} \gradlik(\theta^*) + \op(1) \notag \\
&\to_d \mathcal{N}(0, \Fisher^{-1}(\theta^*) \Fisher(\theta^*)
\Fisher^{-T}(\theta^*) \notag \\
&= \mathcal{N}(0, \Fisher^{-1}(\theta^*).
    \end{align}
\end{proof}

\lref{lemma:newton_efficient}, together with the results of
subsections~\ref{ssec:error_propagation} and \ref{ssec:regularity_loglik}, conclude
the proof of \thrmref{thrm:as_eff} by taking $\thetainit = \hat \theta_{MM}$.

\comment{
\section{Random Systems}

Details on how the random systems were generated?
}

\end{document}